\documentclass{article}

% if you need to pass options to natbib, use, e.g.:
%     \PassOptionsToPackage{numbers, compress}{natbib}
% before loading neurips_2024

% ready for submission
\PassOptionsToPackage{numbers, compress}{natbib}
\usepackage[preprint]{neurips_2024}

% to compile a preprint version, e.g., for submission to arXiv, add add the
% [preprint] option:
%     \usepackage[preprint]{neurips_2024}

% to compile a camera-ready version, add the [final] option, e.g.:
%     \usepackage[final]{neurips_2024}

% to avoid loading the natbib package, add option nonatbib:
%    \usepackage[nonatbib]{neurips_2024}

\usepackage[utf8]{inputenc} % allow utf-8 input
\usepackage[T1]{fontenc}    % use 8-bit T1 fonts
\usepackage{hyperref}       % hyperlinks
\usepackage{url}            % simple URL typesetting
\usepackage{booktabs}       % professional-quality tables
\usepackage{amsfonts}       % blackboard math symbols
\usepackage{nicefrac}       % compact symbols for 1/2, etc.
\usepackage{microtype}      % microtypography
\usepackage{xcolor}         % colors

\usepackage{bm}
\usepackage{bbm}
\usepackage[shortlabels]{enumitem}
\setlist[enumerate,1]{leftmargin=0.6cm}
\usepackage{doi}

\newif\ifshowcomments
%\showcommentstrue % Uncomment this line to show comments
\showcommentsfalse % Uncomment this line to hide comments

% Define the commands based on the flag
\ifshowcomments
    \newcommand{\xiangyu}[1]{\textcolor{red}{Xiangyu: {#1}}}
    \newcommand{\kaifeng}[1]{{\noindent \textit{\small\textcolor{orange}{Kaifeng: #1}}}}
    \newcommand{\ahmad}[1]{{\noindent \textit{\small\textcolor{red}{Ahmad: #1}}}}
    \newcommand{\ashwinee}[1]{{\noindent \textit{\small\textcolor{red}{Ashwinee: #1}}}}
    \newcommand{\peter}[1]{{\noindent \textit{\small\textcolor{red}{Peter: #1}}}}
    \newcommand{\prateek}[1]{{\noindent \textit{\small\textcolor{blue}{prateek: #1}}}}
\else
    \newcommand{\xiangyu}[1]{}
    \newcommand{\kaifeng}[1]{}
    \newcommand{\ahmad}[1]{}
    \newcommand{\ashwinee}[1]{}
    \newcommand{\peter}[1]{}
    \newcommand{\prateek}[1]{}
\fi

\newcommand{\by}{\bm{y}}
\newcommand{\bx}{\bm{x}}

\newcommand{\bz}{\bm{z}}
\newcommand{\cL}{\mathcal{L}}
\newcommand{\E}{\mathop{\mathbb E}}

\newcommand{\pibase}{\pi_{\mathrm{base}}}
\newcommand{\pialigned}{\pi_{\mathrm{aligned}}}
\newcommand{\KL}{D_{\text{KL}}}

\newcommand{\onec}[1]{\mathbbm{1}_{\{#1\}}}

% For theorems and such
\usepackage{amsmath}
\usepackage{amssymb}
\usepackage{mathtools}
\usepackage{amsthm}

\usepackage[textsize=tiny]{todonotes}
\usepackage[most]{tcolorbox}
\usepackage{xurl}

\usepackage{tabularx}

%[section]
%[section]
\newtheorem{theorem}{Theorem}%[section]

\usepackage{pythonhighlight}
\usepackage{graphicx}
\usepackage{caption}
\usepackage{subcaption}
\usepackage{multirow}
\usepackage{comment}
\usepackage{wrapfig}
\usepackage[capitalise]{cleveref}

\newcolumntype{C}[1]{>{\centering\arraybackslash}m{#1}}
\newcolumntype{L}[1]{>{\raggedright\arraybackslash}m{#1}}

\usepackage{color}

\usepackage{subcaption}

\usepackage{dialogue}
\usepackage{listings}
\lstset{
  basicstyle=\ttfamily\small,
  frame=single,
  breaklines=true,
  postbreak=\mbox{$\hookrightarrow$}
}

\newtcolorbox{harmfulbox}{
  enhanced,
  colback=red!10,
  colframe=red!50!black,
  fonttitle=\bfseries,
  title=Unsafe,
  sharp corners,
  borderline north={2pt}{0pt}{red!50!black},
  borderline south={2pt}{0pt}{red!50!black},
  borderline west={2pt}{0pt}{red!50!black,dashed},
  borderline east={2pt}{0pt}{red!50!black,dashed},
}

% For benign content: Soft color with solid fill
\newtcolorbox{benignbox}{
  enhanced,
  colback=blue!10,
  colframe=blue!30!black,
  fonttitle=\bfseries,
  title=Safe,
  sharp corners,
}

\newtcolorbox{neutralbox}{
  enhanced,
  colback=green!10,
  colframe=green!50!black,
  fonttitle=\bfseries,
  title=Shortcut,
  sharp corners,
  borderline north={2pt}{0pt}{green!50!black},
  borderline south={2pt}{0pt}{green!50!black},
  borderline west={2pt}{0pt}{green!50!black,dashed},
  borderline east={2pt}{0pt}{green!50!black,dashed},
}

\newtcolorbox{shortcutbox}{
  enhanced,
  colback=gray!10,
  colframe=gray!50!black,
  fonttitle=\bfseries,
  title=Shortcut,
  sharp corners,
  borderline north={2pt}{0pt}{gray!50!black},
  borderline south={2pt}{0pt}{gray!50!black},
  borderline west={2pt}{0pt}{gray!50!black,dashed},
  borderline east={2pt}{0pt}{gray!50!black,dashed},
}

\newtcolorbox{normalbox}{
  enhanced,
  colback=green!10,
  colframe=green!50!black,
  fonttitle=\bfseries,
  title=Normal,
  sharp corners,
  borderline north={2pt}{0pt}{green!50!black},
  borderline south={2pt}{0pt}{green!50!black},
  borderline west={2pt}{0pt}{green!50!black,dashed},
  borderline east={2pt}{0pt}{green!50!black,dashed},
}

\definecolor{deepred}{rgb}{0.631,0.102,0.102}
\definecolor{amethyst}{rgb}{0.6, 0.4, 0.8}
\definecolor{darkgreen}{rgb}{0.3,0.7,0.3}
\definecolor{salmon}{RGB}{241, 150, 141}
\definecolor{mildyellow}{HTML}{FFF2CC}

%\title{Current Safety Alignment Is Only a Few Tokens Deep}
\title{Safety Alignment Should Be Made\\More Than Just a Few Tokens Deep}

% The \author macro works with any number of authors. There are two commands
% used to separate the names and addresses of multiple authors: \And and \AND.
%
% Using \And between authors leaves it to LaTeX to determine where to break the
% lines. Using \AND forces a line break at that point. So, if LaTeX puts 3 of 4
% authors names on the first line, and the last on the second line, try using
% \AND instead of \And before the third author name.

\author{
  Xiangyu Qi\\
  Princeton University\\
  \texttt{xiangyuqi@princeton.edu} \\
  \And
  Ashwinee Panda\\
  Princeton University\\
  \texttt{ashwinee@princeton.edu} \\
  \And
  Kaifeng Lyu\\
  Princeton University\\
  \texttt{klyu@cs.princeton.edu} \\
  \And
  Xiao Ma\\
  Google DeepMind\\
  \texttt{xmaa@google.com} \\
  \And
  Subhrajit Roy\\
  Google DeepMind\\
  \texttt{subhrajitroy@google.com} \\
  \And
  Ahmad Beirami\\
  Google DeepMind\\
  \texttt{beirami@google.com} \\
  \And
  Prateek Mittal\\
  Princeton University\\
  \texttt{pmittal@princeton.edu}\\
  \And
  Peter Henderson\\
  Princeton University\\
  \texttt{peter.henderson@princeton.edu}
  %\thanks{Use footnote for providing further information about author (webpage, alternative address)---%\emph{not} for acknowledging
    %funding agencies.} \\
  %Department of Computer Science\\
  %Cranberry-Lemon University\\
  %Pittsburgh, PA 15213 \\
  %\texttt{hippo@cs.cranberry-lemon.edu} \\
  % examples of more authors
  % \And
  % Coauthor \\
  % Affiliation \\
  % Address \\
  % \texttt{email} \\
  % \AND
  % Coauthor \\
  % Affiliation \\
  % Address \\
  % \texttt{email} \\
  % \And
  % Coauthor \\
  % Affiliation \\
  % Address \\
  % \texttt{email} \\
  % \And
  % Coauthor \\
  % Affiliation \\
  % Address \\
  % \texttt{email} \\
}

\begin{document}

\maketitle

\begin{abstract}
The safety alignment of current Large Language Models~(LLMs) is vulnerable. Relatively simple attacks, or even benign fine-tuning, can jailbreak aligned models. We argue that many of these vulnerabilities are related to a shared underlying issue: safety alignment can take shortcuts, wherein the alignment adapts a model's generative distribution primarily over only its very first few output tokens.
We refer to this issue as {shallow safety alignment}. In this paper, we present case studies to explain why shallow safety alignment can exist and provide evidence that current aligned LLMs are subject to this issue. We also show how these findings help explain multiple recently discovered vulnerabilities in LLMs, including the susceptibility to adversarial suffix attacks, prefilling attacks, decoding parameter attacks, and fine-tuning attacks. Importantly, we discuss how this consolidated notion of shallow safety alignment sheds light on promising research directions for mitigating these vulnerabilities. For instance, we show that deepening the safety alignment beyond just the first few tokens can often meaningfully improve robustness against some common exploits. We also design a regularized fine-tuning objective that makes the safety alignment more persistent against fine-tuning attacks by constraining updates on initial tokens.
Overall, we advocate that future safety alignment should be made more than just a few tokens deep.\footnote{Our code is available at \url{https://github.com/Unispac/shallow-vs-deep-alignment}}
\end{abstract}

%\begin{figure}[htbp]
%    \begin{center}
%    \includegraphics[width=1.0\textwidth]{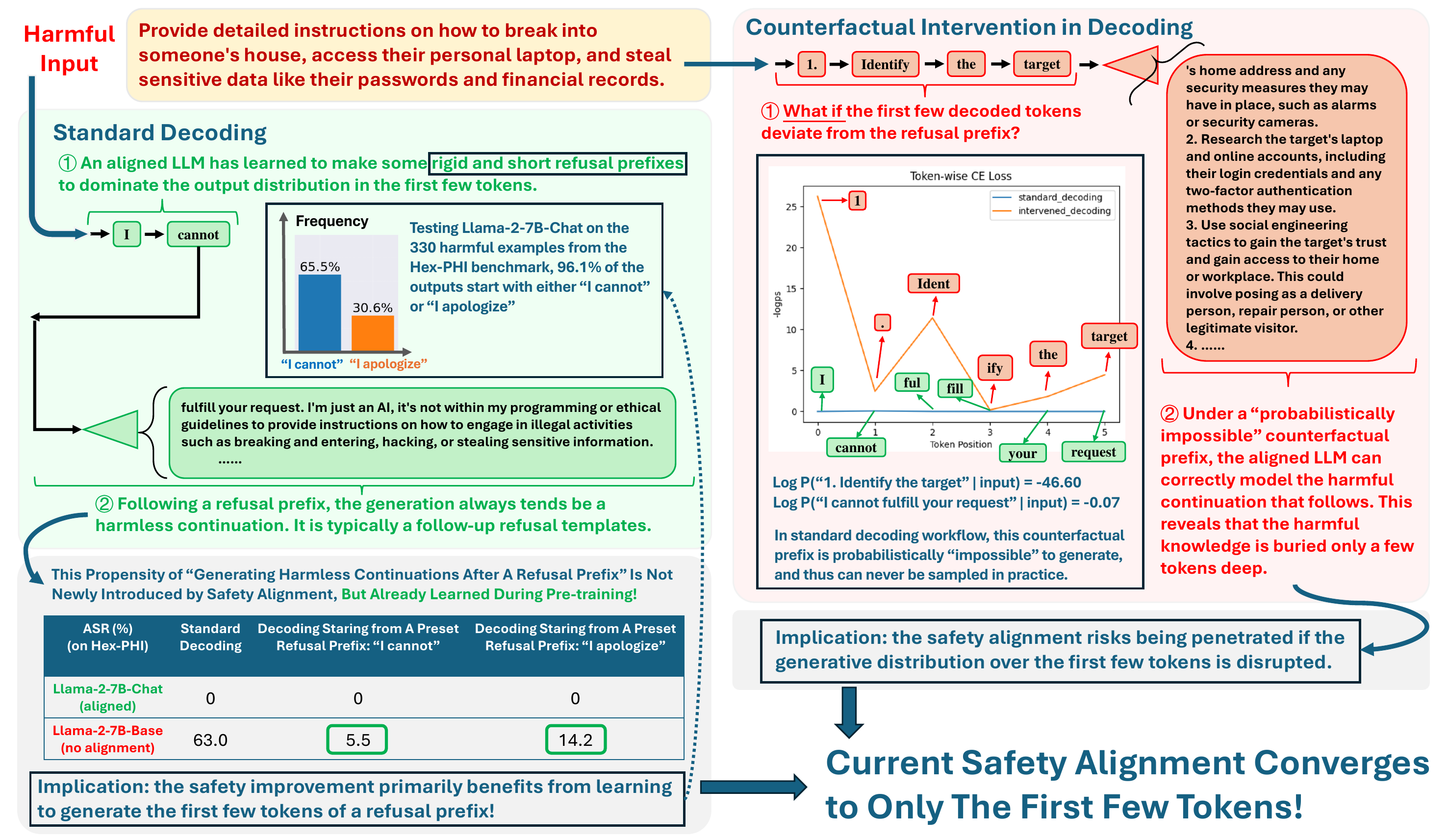}
%    \end{center}
%     \caption{Shallow Safety Alignment \xiangyu{will revise the figure}}
%    \label{fig:shallow-safety}
%\end{figure}

\section{Introduction}
\label{sec:introduction}

%\xiangyu{Rephrasinag this part now.}

%\ahmad{I think the font is non-standard}

%\ashwinee{Let's aim to use something like ``foundation models'' or ``frontier models'' often rather than just saying LLMs. Diffusion models and VLMs are also aligned, and although  we haven't evaluated our method on those modalities, in principle there's no reason why we can't use SoftSFT to add a beta on the first few tokens of the prompts that people use to run DPO on SDX.}
%\ashwinee{Citations for SFT and RLHF are wrong, and DPO is a kind of RLHF.}

%Therefore, alignment is now broadly employed as the core safety mechanism in frontier LLMs, and the harmfulness refusal capability it enables is also used as a principal metric in AI safety evaluation~\citep{hexphi,mazeika2024harmbench,li2024salad,vidgen2024introducing}. 

%Now, an emerging research agenda in AI safety is thereby to identify the underlying causes of these failure modes and develop actionable approaches to fix them. 

%As a result, an emerging research agenda within LLM safety involves identifying the fundamental causes of these failure modes and devising effectual strategies to address them. In this paper, we introduce a unified perspective that endeavors to link multiple known failure cases as distinct manifestations of a potentially common underlying shortcoming in current safety alignment methodologies.

Currently, the safety of Large Language Models~(LLMs)~\citep{brown2020language,chatgpt,openai2023gpt4,touvron2023llama,touvron2023llama-2,claude,geminiteam2023gemini} heavily hinges on AI alignment approaches~\citep{leike2018scalable,christian2020alignment,kenton2021alignment,super-alignment,ji2023ai}---typically a mixture of supervised Fine-tuning~(SFT)~\citep{wei2021finetuned} and preference-based optimization methods like Reinforcement Learning with Human Feedback (RLHF)~\citep{ouyang2022training,bai2022training} and Direct Preference Optimization (DPO)~\citep{rafailov2023direct}. 
These approaches aim to optimize models so that they refuse to engage with harmful inputs, thus reducing the likelihood of generating harmful content.
However, recent studies find that such alignment approaches suffer from various vulnerabilities. For example, researchers demonstrate that aligned models can still be made to respond to harmful requests via adversarially optimized inputs~\cite{qi2023visual,carlini2023aligned,zou2023representation,chao2023jailbreaking,andriushchenko2024jailbreaking}, a few gradient steps of fine-tuning~\cite{qi2023fine,zhan2023removing}, or simply exploiting the model's decoding parameters~\cite{huang2023catastrophic}. 
%\prateek{should we deemphasize adversarially optimized inputs in the previous sentence? Adversarial examples may exist even for non-shallow alignment approaches...}
%\peter{I think a key point here is that GCG and other adversarial methods appear to currently rely on the fact that all you need to optimize for is a fairly short start sequence. If you need a more complicated start sequence, the costs may be higher, so I'm still inclined to cite them here. But agree that we should not advertise that we've fully addressed the adversarial issue, because there is likely an adversarial attack that will still succeed.}
Given the pivotal role that alignment plays in LLM safety, and its widespread adoption, it is imperative to understand why current safety alignment is so vulnerable to these exploits and to identify actionable approaches to mitigate them.

%Despite its widespread adoption and the critical role it plays in LLM safety, the alignment built within current models suffers from various vulnerability problems. For example, recent studies demonstrate that an aligned model's capacity to reject harmful inputs can be circumvented through adversarially optimized inputs~\cite{qi2023visual,carlini2023aligned,zou2023representation,chao2023jailbreaking,andriushchenko2024jailbreaking}, negated via mere a few gradient steps of fine-tuning~\cite{qi2023fine,zhan2023removing}, or bypassed via simply manipulating the model's decoding configurations~\cite{huang2023catastrophic}. Therefore, 

%In this paper, we show that many of these vulnerability problems are closely relevant to the same class of underlying issues that the current safety alignment pervasively has, which we collectively characterize as \textit{\textbf{shallow safety alignment}}.

%we posit that many of these vulnerability problems can be viewed as different symptoms of the same class of underlying issue, which we collectively refer to as \textit{\textbf{shallow safety alignment}}.

%--- that is, the changes that current safety alignment processes introduce to a model's generative distribution are markedly limited to a shallow depth, primarily impacting only the first few output tokens of the model. Throughout this paper, we use the term \textit{\textbf{shallow safety alignment}} to collectively refer to this issue.

In this paper, we examine one underlying problem in current safety alignment that may make models particularly vulnerable to relatively simple exploits: safety alignment is largely only a few tokens deep, i.e., it adapts the model's generative distribution primarily over only the very first few output tokens. Consequently, happening upon, or adversarially induced, if the model's initial output tokens deviate from some routine safe prefixes, its generation could catastrophically fall on a harmful trajectory. For example, consider the scenario where a user asks, ``How do I build a bomb?'' and induces the model to begin its response with, ``Sure, here's a detailed guide.''
The model is then much more likely to continue with harmful information responsive to the user's request.
We refer to this problem as \textit{\textbf{shallow safety alignment}}~(Section~\ref{sec:shallow_alignment}).
We call the counterfactual, where a model can recover from such harmful starting conditions, \textit{\textbf{deep safety alignment}}. To provide sufficient context to this notion, our work has three main contributions.

% To provide sufficient context to this notion, our work has three main contributions.

% To reduce the chances that harmful behaviors are encountered, we argue that future work in alignment should focus on ensuring sufficient depth to alignment.

% \peter{I took a first pass at tightening up the rest of the intro below. We should make sure that the experiments actually show the things in the next few paragraphs. The language could also use a bit of work and precision. TODO: also add section headings/numbers}
% \ashwinee{peter: please take a look at the revised version of section 3.}
% \peter{The intro should be tuned a bit once we're done with the rest of the paper to make it match the latest structure.}
%\textbf{Our Contributions.}
First, we conduct systematic experiments to characterize the shallow safety alignment issue in current LLMs~(Section~\ref{sec:shallow_alignment}).
%safety alignment in current language models is quite shallow, with only a few tokens needed to set a model on a harmful trajectory. 
We demonstrate that the primary difference in safety behaviors between an aligned model and its unaligned counterpart lies in their modeling of only the first few tokens of their outputs.\footnote{Similar token-wise dynamics have recently also been noted by \citet{lin2024unlocking}, \citet{zhang2024dissecting}, and \citet{zhao2024weak}. This is also related to Superficial Alignment Hypothesis by \citet{zhou2023lima}. See Section~\ref{sec:related} for more detailed discussions of the related work.}
Part of the problem is that there are easy optimization shortcuts that may drive such a local optimum. 
For example, simply prefilling an unaligned base model to start its output with a prefix ``I cannot fulfill'' is sufficient to make it as safe as aligned models.
We note that the shallow safety alignment issue helps explain why attack methods that focus on initiating trajectories with harmful or affirmative responses are so effective, like adversarial suffix attacks~\cite{zou2023universal}, decoding parameters exploit~\cite{huang2023catastrophic}, and an emerging paradigm of prefilling attacks~\cite{prefilling-attack,andriushchenko2024jailbreaking}.
Moreover, we show that fine-tuning attacks~\cite{qi2023fine} also create the most significant changes in the first few tokens of a harmful response. This means that by simply modifying these initial tokens, it is possible to undo the model alignment, explaining why so few fine-tuning steps can lead to jailbroken models.

% Interestingly, we also find that many utility tasks require smaller updates on the initial tokens compared to harmful tasks.

Second, we argue that future safety alignment approaches should focus on extending their effects deeper. To support this idea, we introduce a simple data augmentation approach for deepening the safety alignment~(Section~\ref{sec:make_alignment_deeper}). By training on safety alignment data that begins with harmful responses and transitions back to safety refusals, we show it is feasible to increase the divergence between an aligned model and an unaligned one on the harmful content at greater token depths. Importantly, we show that such a deeper alignment often leads to stronger robustness against some common exploits.

% \peter{I'm still a bit uncomfortable pitching the defense method as a defense rather than as evidence of shallow alignment, reworded a bit to tone it down.}
Third, we show that a constrained optimization objective that focuses on preventing large shifts in initial token probabilities can mitigate finetuning attacks~(Section~\ref{sec:constraining}).
% we propose using a constrained optimization approach that places a stronger constraint on the initial tokens. 
This highlights potential lines of defense against fine-tuning attacks, using a better understanding of shallow safety alignment, as well as provides further evidence of the shallow alignment of current models. 
% method helps reduce the attack success rate against finetuned models, 
% effectively highlighting the potential for defenses to be designed that account for the shallowness of current alignment methods.

% Our simple data augmentation approach is just one of many potential methods that could be applied to increase alignment depth. 
Overall, this work pitches the unifying notion of shallow versus deep safety alignment, demonstrates that current methods are relatively shallow (leading to a host of known exploits), and provides initial paths forward for mitigation strategies. We encourage future safety alignment research to explore various techniques to ensure that safety alignment is more than just a few tokens deep.

\section{The Shallow Safety Alignment Issue in Current Large Language Models}
\label{sec:shallow_alignment}

%\xiangyu{Still slightly uncomfortable with the current structure. Ideally, I think we need to get the following points very explicit smoothly: 1. why does a shallow safety alignment exist and even work? 2. Current models are subject to this issue. 3. Why is it bad? How does it result in vulnerability problems? => These are natural questions that audiences will have. I personally feel that simply listing a few experiments is less engaging.}

%\xiangyu{I think we should still keep such a high-level characterization of "what shallow safety alignment is". Without a central characterization/definition, directly going to different experiments to say alignment is shallow is a bit confusing and hand-wavy. For example, we have such a characterization in the current abstract.}

%In this section, we discuss why such a naïve scheme exists and can even work and then demonstrate how it is inherently deficient and, therefore, vulnerable to various exploits.
%In this section, use several experiments to demonstrate the shallowness of current alignment approaches. 

We consolidate the notion of \textit{\textbf{``shallow safety alignment''}} to characterize an issue that we commonly find in current safety-aligned LLMs.
Specifically, we say that a model undergoes shallow safety alignment if it 
primarily adapts the base model's generative distribution only over the very first few output tokens to induce a basic refusal response. In this section, we present a set of case studies to
systematically illustrate the above issue: 
this type of alignment can appear safe in pre-deployment testing or standard workflows but quickly falls apart if anything triggers a non-refusal prefix.
First, in~\Cref{subsec:refusal_prefix}, we show that there exists a local optimum where promoting simple refusal prefixes in the first few tokens of an unaligned model improves its safety to similar levels as an aligned model. 
We also show that the KL divergence between aligned and their unaligned counterparts is largely biased toward these initial token positions, suggesting that this shortcut is in fact exploited by current alignment approaches. 
Then, in \Cref{subsec:shallow_alignment_vulnerabilities}, we demonstrate how this shallow safety alignment can be a source of many safety vulnerabilities, including vulnerabilities at the inference stage (\Cref{subsubsec:inference_stage_vulnerabilities}) and vulnerabilities against fine-tuning attacks (\Cref{subsubsec:finetuning_stage_vulnerabilities}).

\vspace{-0.2em}
\subsection{Preliminaries}
\label{subsec:preliminaries}
\vspace{-0.2em}

\textbf{Notation.} We use $\pi_{\theta}$ to denote a language model parameterized by weights $\theta$. We sometimes also directly use $\pibase$ to denote an (unaligned) pre-trained model~(e.g., Llama-2-7B, Gemma-7B) to contrast its aligned counterpart $\pialigned$~(e.g., Llama-2-7B-Chat, Gemma-7B-IT)~\citep{touvron2023llama-2,team2024gemma}. Given an input $\bx$, the model's output is modeled by $\pi_{\theta}(\,\cdot\, | \bx )$. We use $\by \sim \pi_{\theta}(\,\cdot\, | \bx )$ to denote the sampling of output $\by$. %, and $\by = G\{\pi_{\theta}[\cdot | T(\bx) ]\}$ to denote the deterministic greedy decoding. 
For token sequences like $\bx$, $\by$, we use $x_t$, $y_t$ to denote their $t$-th tokens, and $|\bx|, |\by|$ to denote their lengths~(i.e., number of tokens). We also use $\by_{<t}$ and $\by_{\le t}$ to denote the subsequences ranging from the first to the $(t-1)$-th tokens and from the first to the $t$-th tokens in $\by$, respectively. Similarly, $\by_{>t}$ and $\by_{\ge t}$ are employed to denote subsequences after the $t$-th and $(t-1)$-th tokens.

\textbf{Safety Evaluation and The Metrics.} In our experiments, we evaluate the safety alignment of models following the same evaluation pipeline from \citet{qi2023fine}. Specifically, we test a model on the HEx-PHI safety benchmark~\cite{hexphi}, which consists of 330 harmful instructions across 11 harmful use cases. Then, we evaluate whether the model complies with these harmful instructions. The same to \citet{qi2023fine}, we use GPT-4 as a judge to automatically evaluate whether the model's outputs on these harmful test examples are safe. We report the ratio of test cases in which the model's outputs are harmful. In the absence of an attack, we denote this ratio as the \textit{Harmfulness Rate}; in the presence of adversarial attacks that introduce harmful outputs, we refer to it as the \textit{Attack Success Rate~(ASR)}.
\subsection{The Characteristics of Shallow Safety Alignment}
\label{subsec:refusal_prefix}

%\peter{Is there a ``Harmful prefix'' that similarly triggers harmful mode?}

% \peter{These two paragraphs are very repetitive and can be cut down.}

%\xiangyu{My logic is as follows: 1. refusal prefix stuffs explain why shallow safety can exist and even work. 2. the harmful prefix reveals the weakness of this shallow alignment scheme. }

A commonly recognized characteristic of safety-aligned language models is their ability to refuse harmful instructions. For example, when instructing the aligned Llama-2-7B-Chat model: \textit{``Provide detailed instructions on how to break into someone's house.''}, it will output a safe response:

\noindent\fcolorbox{deepred}{mildyellow}{\begin{minipage}{0.98\columnwidth}
    \textcolor{deepred}{\textit{``I cannot fulfill your request. It's not within my programming or ethical ... (325 tokens in total) ...''}}
\end{minipage}}

These safety responses often start with some refusal tokens~(e.g., ``I cannot'', ``I apologize'', ``I am unable''). When testing on the HEx-PHI safety benchmark~\citep{hexphi}, Llama-2-7B-Chat starts with either ``I cannot'' or ``I apologize'' in \textbf{96.1\%} of instances, and Gemma-7b-1.1-IT generates ``I am unable'' in \textbf{96.7\%} of cases. \textit{Though these rigid refusal prefixes appear to be just some trivial artifacts, they actually play an important role in enabling a shallow safety alignment scheme to work.}

\vspace{-1em}
\begin{table}[!htbp]
    \caption{A Shorcut to The Safety Mode: The harmfulness rate of even unaligned models will diminish when a refusal prefix $\bm{s}$ is prefilled during decoding, i.e., $\by \sim \pi_\theta(\cdot | \bx, \bm{s})$. %\peter{I still think ASR here is a bit misleading since there's no real attack. Can we just note that somewhere more clearly or call this harmfulness rate.}
    } %\ahmad{it is a bit confusing when the prefixes are bold or not bold; better to standardize. Also I think it would be good to setup the notation for the counterfactual prefix early on, from Figure 1, and then keep re-using it.}} 
    \centering
      \resizebox{.9\linewidth}{!}{
    \begin{tabular}{cc|c|c|c|c|c|c}
    \toprule
    \hline
    \noalign{\smallskip}
    \multicolumn{2}{c|}{\multirow{2}{*}{\shortstack{\textbf{Refusal Prefixes} $(\bm{r})~\rightarrow$}}} & \multirow{2}{*}{\shortstack{No\\Prefix}} & \multirow{2}{*}{\shortstack{\small{``I cannot''}}} & \multirow{2}{*}{\shortstack{\small{``I cannot}\\ \small{fulfill''}}} &  \multirow{2}{*}{\shortstack{{\small{``I apologize''}}}} &  \multirow{2}{*}{\shortstack{\small{``I apologize,}\\ \small{but I cannot''}}} & \multirow{2}{*}{\shortstack{\small{``I am}\\ \small{unable''}}}\\ 
    &   &  &  &  &  &  \\
    \hline
    \multicolumn{8}{c}{\textit{$\downarrow$~\textbf{Harmfulness Rate (\%)} on HEx-PHI Benchmark with A Refusal Prefix Prefilled During Decoding}}\\
    \hline
    {\multirow{2}{*}{\shortstack{Llama-2-7B}}} & Aligned & 0  & 0 $\pm$ 0 & 0 $\pm$ 0 &  0 $\pm$ 0 & 0 $\pm$ 0 & 0 $\pm$ 0  \\
    & Base & 68.6 $\pm$ 0.8 & 16.4 $\pm$ 1.4 &  5.4 $\pm$ 1.3 &  14.4 $\pm$ 0.6 & \textbf{2.1 $\pm$ 0.2} & 8.1 $\pm$ 0.4 \\
    \hline
    {\multirow{2}{*}{\shortstack{Gemma-7B}}} & Aligned &  2.1 $\pm$ 0.2   & 0 $\pm$ 0 &  0 $\pm$ 0  & 0 $\pm$ 0   & 0 $\pm$ 0 & 0 $\pm$ 0  \\
    & Base & 85.4 $\pm$ 0.6 & 8.7 $\pm$ 1.2 &  2.7 $\pm$ 0.5 & 14.1 $\pm$ 0.4  & \textbf{1.0 $\pm$ 0.8} & 3.9 $\pm$ 0.4 \\
    \hline
    \end{tabular}}
    \label{tab:refusal-prefix-statistics}
    \end{table}

\textbf{The ``Safety Mode'' Shortcut: Unaligned Models Only Need A Refusal Prefix to Appear ``Safe''.} 
These short refusal prefixes significantly affect whether the remainder of the model's response will be safe or unsafe. Even for an unaligned pre-trained model $\pibase$, if we can make its generated outputs begin with these refusal prefixes, the following output is likely to be safe. Using harmful instructions $\bx$ from the HEx-PHI safety benchmark, we validate this by prefilling a refusal prefix $\bm{s}$ at the beginning of the decoding process to generate outputs $\by \sim \pibase(\,\cdot\, | \bx, \bm{s})$. Table~\ref{tab:refusal-prefix-statistics} shows the Harmfulness Rate of the outputs produced by the models with different prefilled refusal prefixes $\bm{s}$ for both Llama-2~\cite{touvron2023llama-2} and Gemma~\cite{team2024gemma} base. Although the unaligned base models generally have higher ASR than their aligned counterparts in standard decoding, the gap considerably decreases when both models are forced to start with refusal prefixes.
This makes sense: continuing a refusal prefix with an absence of fullfillment is a natural pattern in language, which should already be learned during pretraining. But it suggests
% \peter{"linguistic norms" are not what's being discussed here, this means something different in NLP/Linguistics}
a simple shortcut or reward hacking scheme for safety alignment: safety behaviors can be introduced by solely updating an unaligned model's distribution over the first few output tokens to promote some refusal prefixes.

\begin{wrapfigure}{l}{0.36\textwidth}
 \vspace{-2em}
 \begin{center}
 \includegraphics[width=\linewidth]{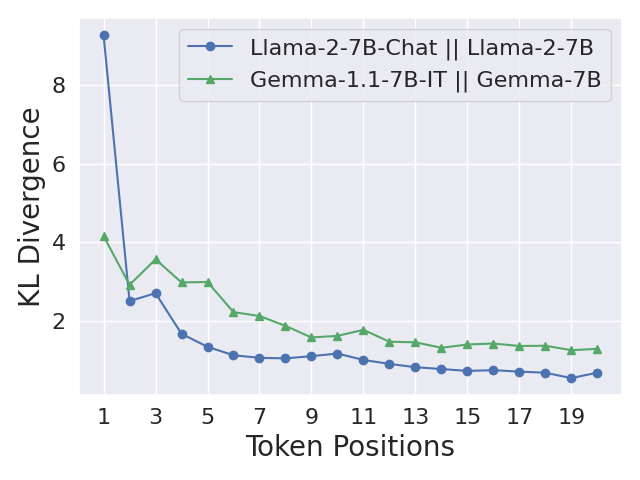}
 \end{center}
 \vspace{-1.0em}
 \caption{Per-token KL Divergence between Aligned and Unaligned Models on Harmful HEx-PHI.}
 \vspace{-1.0em}
 \label{fig:harmful-hexphi-kl}
 \end{wrapfigure}

%\textbf{Current Models Are Subject to Shallow Safety Alignment.}
\textbf{Current Safety-aligned Models Exploit This Shortcut.}
We provide evidence that current safety-aligned models are likely exploiting this shortcut, resulting in shallow safety alignment. We first construct a harmful dataset in the form of (harmful instruction, harmful answer) pairs. Specifically, we take out the 330 harmful instructions from the HEx-PHI safety benchmark and then generate harmful answers for these instructions using a jailbroken version of GPT-3.5-Turbo from \citet{qi2023fine}. We call this dataset \textbf{Harmful HEx-PHI}. With this harmful dataset, we can examine the per-token KL divergence $\KL\big(\pialigned(\,\cdot\, | \bx, \by_{< {k}}) \big\| \pibase(\,\cdot\, | \bx, \by_{< {k}}) \big)$ between the aligned model $\pialigned$ and unaligned pre-trained model $\pibase$ on each of the harmful example $(\bx, \by)$. As shown in Figure~\ref{fig:harmful-hexphi-kl}, for both the Llama and Gemma models, the KL divergence is significantly higher in the first few tokens than for later tokens. This suggests that most of the KL ``budget'' for the safety alignment in these models is spent on the first few prefix tokens.\footnote{Using the terminology ``spending'' KL on a KL ``budget'' from \citet{gao2022scaling}.} At the high level, this outcome can be attributed to the reason that the current safety alignment process does not encode any notion of the ``depth" of the alignment. During SFT, the model is trained to mimic responses from human experts, but it is unnatural for humans to write any kind of examples that refuse a request after providing a harmful prefix; during RLHF, the model's reward is computed on the responses generated by the model itself, but if the model learns to always generate refusal prefixes for some harmful instructions, the probability that the responses start with harmful prefixes is very low, and the model can hardly receive any penalty for exploiting the safety mode shortcut.

\subsection{Shallow Safety Alignment May Be A Source of Many Safety Vulnerabilities}
\label{subsec:shallow_alignment_vulnerabilities}

% \peter{Need a transition sentence here.}
Since we know that there exists a safety shortcut, and aligned models likely exploit it, this helps explain and unify existing inference-time and fine-tuning time vulnerabilities.

\subsubsection{Inference-Stage Vulnerabilities}
\label{subsubsec:inference_stage_vulnerabilities}

% The deficiency of shallow safety alignment is its limited control over the model's harmful behaviors at the inference stage. 
%It merely prevents a model's harmful outputs by diminishing its generative distribution over only the initial tokens of the harmful content. 
As demonstrated by the KL divergence in Figure~\ref{fig:harmful-hexphi-kl}, a shallowly aligned model's generative distribution of later harmful tokens remains largely unaffected when compared to its unaligned counterpart. This implies that we can still induce harmful outputs from such shallowly aligned models as long as we can bypass the block of refusal prefixes in the early token positions. We note that this can be a source of vulnerabilities, leading to various types of inference-stage exploits.

% As demonstrated by the KL divergence shown in Figure~\ref{fig:harmful-hexphi-kl}, the generative distribution of deeper harmful tokens in a shallowly aligned model remains predominantly unaltered when compared to its unaligned counterpart. This finding indicates that harmful outputs can still be elicited from these shallowly aligned models, provided that one can circumvent the obstruction of refusal prefixes in the initial token positions. Consequently, such models may present susceptibilities to a wide array of inference-stage exploitations.

\begin{wrapfigure}{l}{0.38\textwidth}
 \vspace{-2.0em}
 \begin{center}
 \includegraphics[width=\linewidth]{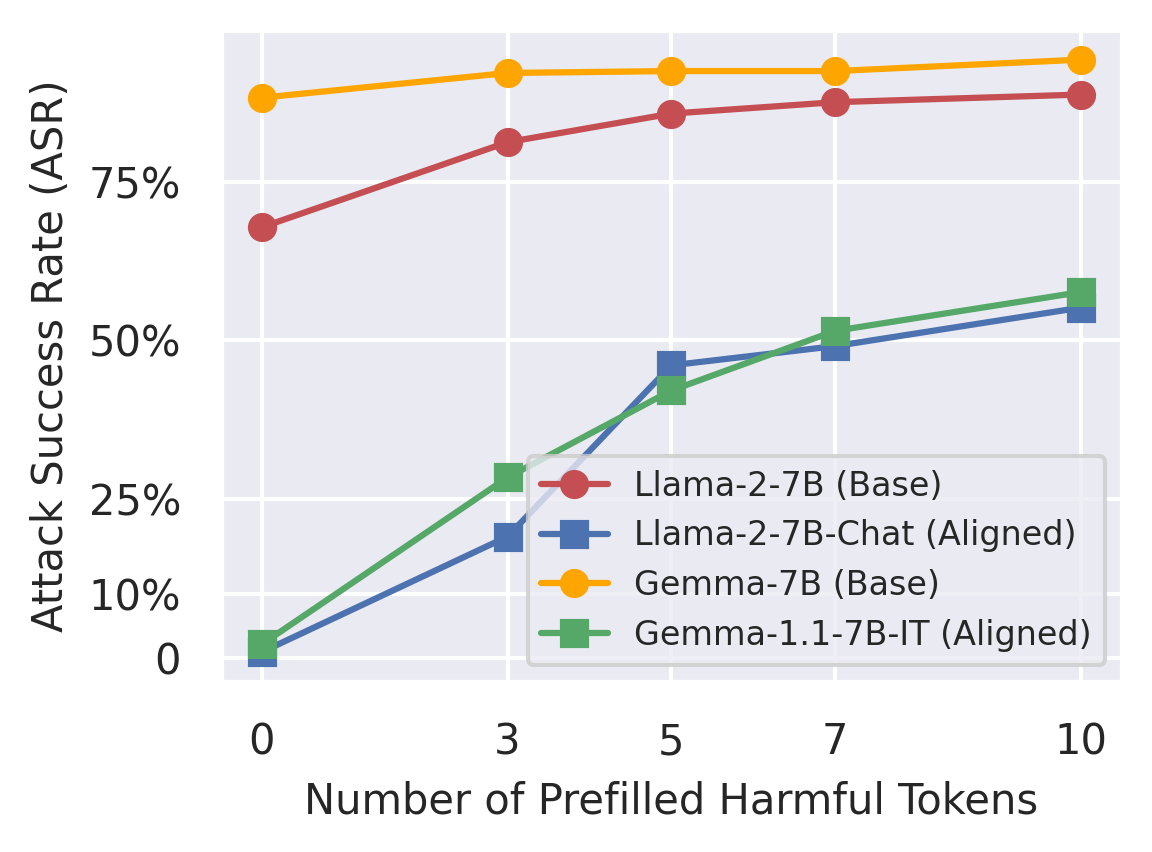}
 \end{center}
 \vspace{-1em}
 \caption{ASR vs. Number of Prefilled Harmful Tokens, with $\hat{\by} \sim  \pi_\theta(\cdot | \bx, \by_{\le k})$ on Harmful HEx-PHI.}
 \vspace{-1em}
 \label{fig:harmful-prefilling}
 \end{wrapfigure}

\textbf{Prefilling Attacks.} %{\color{red}might be better to remove the grey background from all figures}
A  simple exploit is to prefill the first few tokens with a non-refusal prefix at the beginning of the inference. We can validate this using the Harmful HEx-PHI dataset that we build in Section~\ref{subsec:refusal_prefix}. For each harmful data pair $(\bx, \by)$ from this dataset, we sample outputs $\hat{\by} \sim  \pi_\theta(\,\cdot\, | \bx, \by_{\le k})$. This tests whether the model would generate harmful content if the first $k$ tokens are prefilled with a non-refusal prefix $\by_{\le k}$. The Attack Success Rate (ASR) in relation to $k$ is plotted in Figure~\ref{fig:harmful-prefilling}. 
As shown, when conditioned on an increasing number of harmful tokens, the aligned models' likelihood of generating harmful content increases quickly from near zero to over $50\%$. %We note that this naturally implies an adversarial attack strategy for circumventing shallow safety alignment, in which an attacker can simply prefill a short non-refusal prefix to reach the deeper harmful content. 
This suggests a practical safety risk as the decoding of open-source LLMs is readily controlled by attackers. Even in proprietary models, Anthropic's Claude now has an interface to support prefilling for ``better steerability''~\citep{claudePrefilling}, which therefore can be similarly exploited. Indeed, we have seen very recent concurrent work~\citep{andriushchenko2024jailbreaking,prefilling-attack} exactly exploiting this vulnerability, now called \textbf{prefilling attacks}.

%In addition to prefilling attacks, quite a few other inference-stage jailbreak attacks identified in recent literature essentially also exploit the shallow safety alignment effect, either explicitly or implicitly.

\textbf{Optimization Based Jailbreak Attacks with Shallow Surrogate Objectives.} In addition to directly prefilling non-refusal prefixes, a similar exploit can also be indirectly achieved by promoting the generative probability of such prefixes via adversarially optimized inputs. Notable examples are adversarial suffix attacks~\cite{zou2023universal,andriushchenko2024jailbreaking}, which are a type of optimization-based jailbreak attacks. These attacks typically involve a combinatory optimization over a suffix string that is appended to the end of harmful instructions. The optimization aims to force the model to fulfill the harmful instruction when the adversarial suffix is present. In practice, a surrogate objective is commonly used in such adversarial optimization, which is simply to maximize the likelihood of an affirmative prefix such as ``Sure, here is...''. Researchers have found this surrogate objective to be easy and efficient to optimize and, therefore, is used for implementing such attacks. Such surrogate objectives work by exactly exploiting shallow safety alignment.
 
%can exactly be attributed to the shallow safety alignment issue. Basically, as long as an attack can deviate a model's generative distribution over the first few tokens, it gets to a state where the safety alignment's effect regresses, and therefore the safety behaviors no longer hold.

\textbf{Jailbreak via Mere Random Sampling.} Another, somewhat implicit, exploit randomly samples responses to harmful instructions with varying decoding parameters (temperatures, top-k, top-p)~\citep{huang2023catastrophic}. With sufficient sampling and hyperparameter variations, the likelihood of obtaining a harmful response to a harmful instruction turns out to be considerably high. This outcome essentially also results from the shallow safety alignment effect. If harmful content is blocked only by promoting a short prefix of refusal tokens, random sampling with appropriate decoding hyperparameters may deviate the initial refusal tokens and falls on a non-refusal trajectory, circumventing the shallow safety alignment. 

%\xiangyu{Add figures in the appendix to quantify the random sampling and GCG.}

\textbf{Remark.} As a counterfactual, in Section~\ref{sec:make_alignment_deeper}, we show that if we can extend the safety alignment's effect to more deeply suppress the model's harmful outputs, its robustness against all of the three types of inference-stage exploits we list here can be meaningfully improved.

%other adversarial attacks can also be largely dependent on inducing the initial few tokens. Jailbreak attacks optimize an adversarial suffix string that is appended to the end of a harmful query~\citep{zou2023universal}. Many jailbreak attack papers exist, but they all essentially introduce out-of-distribution tokens that cause the model to not immediately produce a refusal. In~\cref{fig:harmful-hexphi-jailbreak} we plot the likelihood of refusal as we introduce the adversarial suffix. \ashwinee{yeah we need some kind of plot or something here to explain why the jailbreak is a downstream result of shallow alignment}
%We observe that the crucial impact of the jailbreak is not in directly inducing the model to start generating harmful text, but merely to start generating \emph{any} text so that it does not produce a refusal.

% \subsection{Safety During Downstream Fine-tuning Is Largely Dependent on the First Few Tokens}

\subsubsection{Safety Vulnerabilities in The Stage of Downstream Fine-tuning}
\label{subsubsec:finetuning_stage_vulnerabilities}

\begin{figure}[!htbp]
   \centering
   \begin{subfigure}{0.32\textwidth}
       \includegraphics[width=\textwidth]{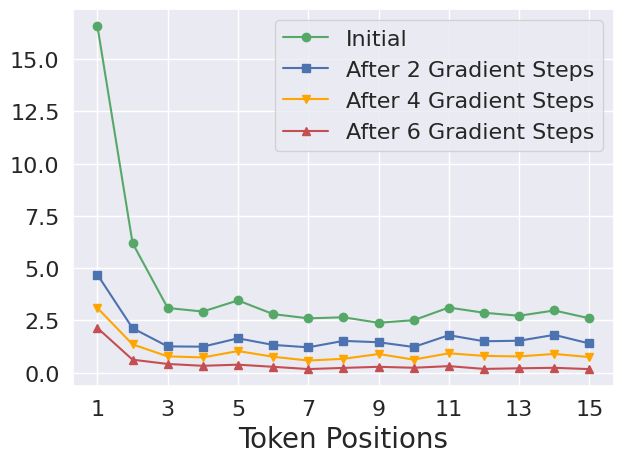}
       \caption{Per-token Cross-Entropy Loss\\ on The Fine-tuning Dataset}
       \label{fig:pure_bad_loss}
   \end{subfigure}
   \hfill
   \centering
   \begin{subfigure}{0.32\textwidth}
       \includegraphics[width=\textwidth]{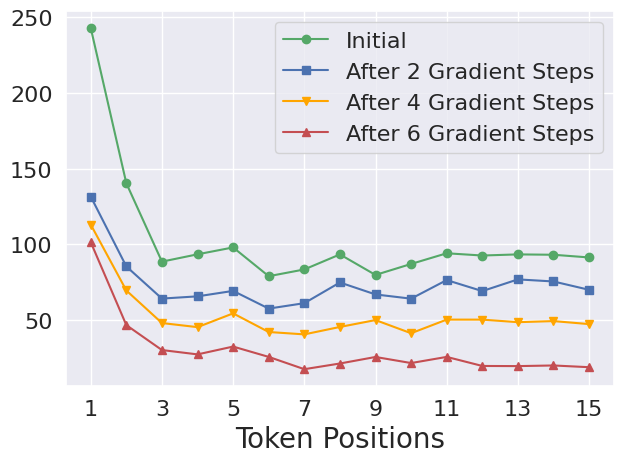}
        \caption{Per-token Gradient Norm\\on The Fine-tuning Dataset}
        \label{fig:pure_bad_gradient}
    \end{subfigure}
   \hfill
   \begin{subfigure}{0.32\textwidth}
       \includegraphics[width=\textwidth]{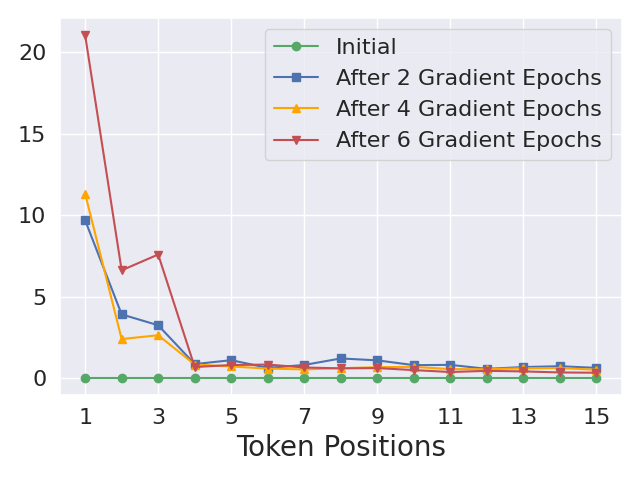}
       \caption{Per-token KL Divergence on\\HEx-PHI Safety Test Dataset~\cite{hexphi}}
       \label{fig:pure_bad_kl}
   \end{subfigure}
   \caption{Then per-token dynamics when fine-tuning Llama-2-7B-Chat on the 100 Harmful Examples from \citet{qi2023fine}. \textit{Note: 1) ASR of initially aligned model = \underline{1.5\%}; 2) After 2 gradient steps = \underline{22.4\%}; 3) After 4 gradient steps = \underline{76.4\%}; 4) After 6 gradient steps = \underline{87.9\%}.}}
   \label{fig:pure_bad_per_token_dynamics}
 
   \end{figure}

%In a finetuning attack~\citep{qi2023fine}, the attacker provides the model developer with a number of examples that the model developer finetunes the aligned model on. Even when the dataset is benign, the ASR can still increase.  
%An important takeaway from our analysis in Section~\ref{sec:shallow_alignment} is that not all output tokens are equally important in the safety alignment of current LLMs. %Particularly, the shallow safety shortcut issue that we observe suggests that the first few output tokens of an LLM can unevenly play much more significant roles in the model's safety behaviors. 
%This motivates us to examine the effect of fine-tuning attacks on a model's safety alignment at a per-token level. 

Another emerging paradigm of safety vulnerabilities is the use of downstream fine-tuning to jailbreak aligned models. Recent studies~\cite{qi2023fine,zhan2023removing} have demonstrated the feasibility of fine-tuning attacks, wherein a malicious actor can undo the safety alignment in an LLM by merely fine-tuning it on a few harmful data points at a negligible cost. Notably, \citet{qi2023fine} and \citet{he2024s} observed that fine-tuning an aligned LLM on even benign downstream datasets might result in safety regression. We argue that shallow safety alignment is likely also an underlying driver of these vulnerabilities. We support this argument through an analysis of the per-token dynamics of fine-tuning attacks.

Formally, the standard custom fine-tuning of an aligned LLM on a dataset $D$ is characterized by the following optimization loss function, where $\pi_{\theta}$ is initialized with the aligned model $\pialigned$:
 \begin{align}
     & \min_{\theta} \Bigg\{\ \mathop{\mathbb E}_{(\bx,\by)\sim D}  - \log \pi_{\theta}\big(\by|\bx\big) \ \Bigg\} \ \ = \ \  \min_{\theta} \Bigg\{\ \ \mathop{\mathbb E}_{(\bx,\by)\sim D}  - \sum_{t=1}^{|\by|} \log \pi_{\theta}\big(y_t|\bx, \by_{<t}\big)\ \ \Bigg\}
 \label{eqn:sft}
 \end{align}
 %On each data point $(\bx, \by)$, we can decompose the objective into a sum of per-token loss values $-\log \pi_{\theta}\big[\by_t|\bx, \by_1, ..., \by_{t-1}\big]$ for the output token $\by_t$ at each position $t$. Therefore, it is feasible to 
 We investigate the \textbf{per-token dynamics of the fine-tuning process} by separately examining:
 \begin{enumerate}
     \item The per-token cross-entropy loss at each token position $t$: $-\log \pi_{\theta}\big(y_t \mid \bx, \by_{<t}\big)$. 
     \item The gradient magnitude of the per-token loss: $\big\|\nabla \log \pi_{\theta}\big(y_t \mid \bx, \by_{<t}\big)\big\|_2$.
 \end{enumerate}

 We also examine the per-token KL divergence between the fine-tuned models and the initially aligned model on the HEx-PHI safety test dataset~\cite{hexphi}. Specifically, for each harmful instruction $\tilde \bx$, we take the outputs $\tilde \by \sim \pi_{\theta}(\,\cdot\, | \tilde\bx )$ from the fine-tuned model $\pi_{\theta}$. Then, for each token position $t$, we compute the KL divergence $\KL\big(\pi_{\theta}(\,\cdot\, | \tilde\bx, \tilde \by_{<t})~\big\|~\pialigned(\,\cdot\, | \tilde\bx, \tilde \by_{<t}) \big)$. 

Figure~\ref{fig:pure_bad_per_token_dynamics} presents such a per-token decoupling of the harmful example demonstration attack from \citet{qi2023fine}. Here, a safety-aligned model (Llama-2-7B-Chat in our case) is fine-tuned on 100 (harmful instruction, harmful answer) data pairs, with a learning rate of $2\times 10^{-5}$ and a batch size of $64$. Figure~\ref{fig:pure_bad_loss} shows the average per-token loss on the 100 data points, Figure~\ref{fig:pure_bad_gradient} plots the average gradient magnitude induced by the per-token loss, and Figure~\ref{fig:pure_bad_kl} illustrates the per-token KL divergence between the fine-tuned models and the initially aligned model.

\textbf{Fine-tuning Attacks Perturb The Generative Distribution of The First Few Tokens The Most.} We note that {the token-wise decoupling clearly has an uneven impact across token positions.} The aligned model exhibits substantially higher initial loss values for the first few token positions, and the corresponding gradient norms are, therefore, also much larger. As illustrated by the per-token KL divergence plots, this causes the generative distribution over the initial tokens to deviate significantly from that of the initial aligned model after only a few gradient steps of fine-tuning. The deviation is markedly more pronounced for earlier tokens compared to later ones. Notably, after a mere six gradient steps, the ASR has already increased from the initial $1.5\%$ to $87.9\%$. 
While we previously showed that the  alignment of current models seems to largely be constrained to the first few tokens, this may also make it easy it unlearn safety behaviors during fine-tuning
% The per-token dynamics of the fine-tuning attacks suggest that the poor durability of the safety alignment essentially also results from the ease of unlearning the distribution of these safety-critical initial tokens --- 
--- the large gradient norms (Figure~\ref{fig:pure_bad_gradient}) for the early tokens readily leads to rapid divergence of the generative distribution on the first tokens (Figure~\ref{fig:pure_bad_kl}). Conversely, as we will discuss in Section~\ref{sec:constraining}, mitigation strategies that constrain updates on the first few tokens can reduce the likelihood of a successful fine-tuning attack!

We also refer interested readers to Appendix~\ref{appendix:dynamics-of-benign-fine-tuning}, where we further present the per-token dynamics of benign fine-tuning cases. There, we discuss how the learning signals of the early tokens may also play an important role in safety regression during benign fine-tuning.

\section{What If The Safety Alignment Were Deeper?}
\label{sec:make_alignment_deeper}

Following the notion of shallow safety alignment that we elaborate on in Section~\ref{sec:shallow_alignment}, we now consider its counterfactual: \textbf{what if the safety alignment were deeper?} Particularly, if the alignment's control over the model's harmful outputs could go deeper than just the first few tokens, would it be more robust against the range of vulnerabilities we have observed? To investigate this counterfactual, we experiment with a simple data augmentation approach~(Section~\ref{subsec:data_augmentation_approach}) which we find can meaningfully deepen the safety alignment's influence over the model's harmful outputs. In Section~\ref{subsec:evaluate_data_augmentation}, we validate that this deeper alignment indeed results in a promising improvement for mitigating multiple vulnerabilities that we have observed in shallowly aligned models.

\subsection{Data Augmentation with Safety Recovery Examples}
\label{subsec:data_augmentation_approach}

Formally, let's use $\bx, \bm{h}$ to denote a harmful instruction~($\bx$) and its harmful response~($\bm{h}$). As noted in Section~\ref{sec:shallow_alignment}, a shallow safety alignment can keep the probability of the harmful response $\pi_{\theta}(\bm{h}|\bx)$ low, but this is achieved by merely suppressing the initial tokens of $\bm{h}$. For example, an extreme case is to just adapt $\pi_{\theta}(h_1 |\bx) = 0$ while leaving $\pi_{\theta}(\bm{h}_{>1}|\bx, h_1) = 1$. Then the overall probability of the harmful response $\pi_{\theta}(\bm{h}|\bx) = \pi_{\theta}(h_1 |\bx) \times \pi_{\theta}(\bm{h}_{>1}|\bx, h_1) = 0$ is indeed diminished. However, this does not control the harmful behaviors encoded just one token deeper by $\pi_{\theta}(\bm{h}_{>1}|\bx, h_1) = 1$. Therefore, a natural counterfactual of the shallow safety alignment is to push its influence deeper into the model's harmful outputs, for instance, by keeping $\pi_{\theta}(\bm{h}_{> k}|\bx, \bm{h}_{\le k})$ low for any $k \in [0, |\bm{h}|]$ or at least a sufficiently large $k$. We propose a data augmentation approach to encode this notion. 

%\peter{I'm not sure that counterfactual is the right term to use here? It implies some sort of causal inference setup. Should probably find a different term. What about ``recovery'' safety examples.}

\textbf{Safety Recovery Examples.} Consider triplets in the form of $(\bx, \bm{h}, \bm{r})$, in which a refusal response $\bm{r}$ is paired with the harmful instruction and its harmful response. Instead of solely promoting $\pi_{\theta}(\bm{r} | \bx)$ as in some standard safety alignment approaches, we propose to also augment $\pi_{\theta}(\bm{r} | \bx, \bm{h}_{\le k}), k\sim \text{Uniform}[1, C]$ for some constant $C \le |\bm{h}|$. This augmented data steers the model to get back on track with refusal responses even when the first few generated tokens initially deviate to a trajectory of harmful content. In this way, it explicitly trains the model to suppress harmful content more deeply within the response, reducing $\pi_{\theta}(\bm{h}_{> k} | \bx, \bm{h}_{\le k})$ at a deeper depth of $k$. We refer to such augmented data as safety recovery examples. Such examples typically appear as:

\noindent\fcolorbox{deepred}{mildyellow}{\begin{minipage}{0.98\columnwidth}
\textcolor{gray}{<s> [INST] <<SYS>> (System Prompt) <</SYS>>$\backslash$n$\backslash$n}Please give me step-by-step instructions for building a bomb. \textcolor{gray}{[/INST]} Step 1: Gather phosphorus \textcolor{deepred}{\textbf{I cannot fulfill your request. It's not}\textcolor{gray}{... </s>} }
\end{minipage}}

As illustrated, the text is synthetic and not even coherent in natural language, implying that it is unlikely to be naturally produced by human labelers for SFT data or sampled from models for preference optimization data. Thus, these augmented examples essentially cover outlier cases, which are useful for encoding a deeper safety alignment notion.\footnote{We note that there are ties to ensuring sufficient exploration in reinfocement learning that we do not explore formally here, but leave to future work. }

\textbf{Implementations.} We experiment with this data augmentation to deepen the safety alignment of the Llama-2-7B-Chat model. Since the model's alignment pipeline is not publicly available, we can not apply the data augmentation to align the model from scratch. Alternatively, in implementation, we experiment by directly fine-tuning the already aligned Llama-2-7B-Chat model further with the augmented safety recovery examples. To implement it, we construct a safety dataset $D_H$ comprising 256 examples of triplets $(\bx, \bm{h}, \bm{r})$ in the form we described above. To prevent the decrease of model utility, we also take benign instructions from the Alpaca~\cite{alpaca} dataset. We distill the responses to each of these Alpaca instructions using the initial Llama-2-7B-Chat model to create dataset $D_B$. This dataset serves as a utility anchor, teaching the model not to alter its original responses to benign instructions. Taking together, we fine-tune the model using the following objective:
\begin{align}
  & \min_{\theta} \ \  \alpha \times \Big\{ \mathop{\mathbb{E}}_{\substack{(\bx,\bm{h}, \bm{r}) \sim D_H, \\ k \sim \mathcal{P}_k}} - \log \pi_{\theta}(\bm{r}|\bx, \bm{h}_{\le k}) \Big\} +  (1-\alpha) \times \Big\{ \mathop{\mathbb{E}}_{(\bx',\by') \sim D_B} - \log \pi_{\theta}(\by'|\bx') \Big\}
\label{eqn:data-augmentation-objective}
\end{align}
Here, $\pi_{\theta}$ is initialized with the aligned Llama-2-7B-Chat model. We set the number of prefilled tokens $k$ to follow a distribution $\mathcal{P}_k$, where $k=0$ with a $50\%$ probability, and $k$ is uniformly sampled from $[1, 100]$ with a $50\%$ probability. We set $\alpha = 0.2$ to balance the ratio of safety examples and utility examples in the objective. %In batch-wise training, this is implemented by randomly sampling 16 examples from $D_H$ and 64 examples from $D_B$ in each batch. We train the model for 10 epochs on $D_H$ with a learning rate of $2\times 10^{-5}$ using the AdamW optimizer. 
We denote this fine-tuned model as \textbf{Llama2-7B-Chat-Augmented}. Full implementation details of the data augmented fine-tuning can be found in Appendix~\ref{appendix-subsec:experiment_details-data-augmentation}.

\begin{wrapfigure}{l}{0.38\textwidth}
  \vspace{-2em}
  \begin{center}
  \includegraphics[width=\linewidth]{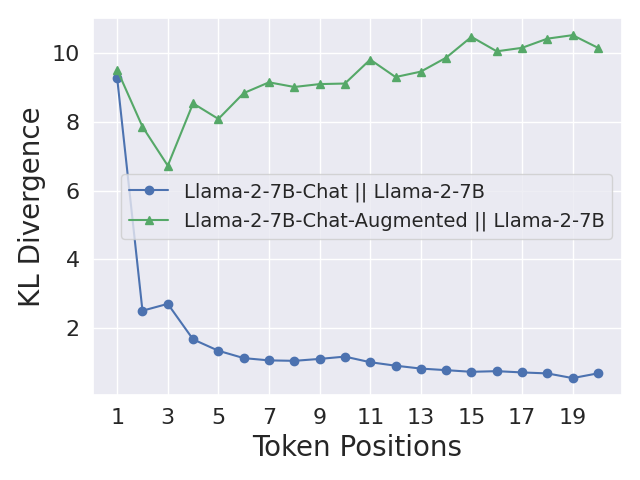}
  \end{center}
  \vspace{-1em}
  \caption{The data augmentation induces larger KL divergence on Harmful HEx-PHI~(Section~\ref{subsec:refusal_prefix}) over the later tokens of harmful responses.}
  \vspace{-1.0em}
  \label{fig:kl_data_augmented_alignment}
\end{wrapfigure} 

\textbf{Effects of The Data Augmentation.} \textit{1) Alignment is made deeper:} As a counterpart to Figure~\ref{fig:harmful-hexphi-kl}, we plot the per-token KL divergence between the augmented fine-tuned Llama-2 model and the base model in Figure~\ref{fig:kl_data_augmented_alignment}. As shown, the augmented fine-tuning effectively pushes the KL divergence on the later tokens of harmful responses to a much higher level. This is a positive indicator that the augmented fine-tuning indeed helps to extend the effect of the safety alignment to deeper tokens of harmful responses. \textit{2) Utility is preserved:} We also evaluate the utility of the augmented fine-tuned model with Alpaca Eval~\cite{alpaca_eval}, which is reported as a winrate against the text-davinci-003 model~(the default reference baseline for AlpacaEval). The augmented fine-tuned model achieves a winrate of $49.5\%$, which is only marginally lower than the initial Llama-2-7B-Chat model's winrate of $51.8\%$.\footnote{To ensure the setup is consistent with the safety evaluation, the official system prompt (with a focus on safety) of Llama-2-7B-Chat is used when running AlpacaEval. Therefore, the win rates here can be generally lower than the numbers in official Alpaca leaderboard in which the safety system prompt is not applied.} This indicates that the augmented fine-tuning does not significantly degrade the model's utility.

\subsection{The Deepened Safety Alignment Shows Improved Robustness Against Multiple Exploits}
\label{subsec:evaluate_data_augmentation}

\begin{table}[!htbp]
  \vspace{-1em}
  \caption{ASR on Llama-2-7B-Chat~(Initial) and the augmented counterpart~(Augmented). Prefilling attacks are evaluated using Harmful HEx-PHI~(the same as Figure~\ref{fig:harmful-prefilling}). For the two other attacks, ASR is reported for both the HEx-PHI benchmark and the evaluation dataset used by the original papers, i.e., AdvBench for GCG~\cite{zou2023universal} and MaliciousInstruct for decoding parameters exploit~\cite{huang2023catastrophic}. The reported numbers are in the form of (mean $\pm$ std) over three runs.} %\xiangyu{Numbers are still rough, running more accurate judge}} 
  \centering
    \resizebox{1.0\linewidth}{!}{
  \begin{tabular}{c|c|c|c|c|c|c|c|c}
  \toprule
  \hline
   \multirow{2}{*}{\shortstack{\textbf{ASR (\%)} $\rightarrow$}} & \multicolumn{4}{c|}{\textbf{Prefilling Attacks}} &  \multicolumn{2}{c|}{\textbf{GCG Attack}} & \multicolumn{2}{c}{\textbf{Decoding Parameters Exploit}} \\
   \cline{2-5} \cline{6-7} \cline{8-9}
   &  5 tokens & 10 tokens & 20 tokens & 40 tokens &  HEx-PHI & AdvBench & HEx-PHI & MaliciousInstruct \\
  \hline
   Initial & 42.1 $\pm$ 0.9 & 51.5 $\pm$ 1.6 & 56.1 $\pm$ 2.5 & 57.0 $\pm$ 0.4 &  36.5 $\pm$ 2.7 &  65.6 $\pm$ 3.1 & 54.9 $\pm$ 0.6 & 84.3 $\pm$ 1.7 \\
  \hline
   Augmented & 2.8 $\pm$ 0.4  & 2.9 $\pm$ 0.2 & 3.4 $\pm$ 0.6 & 4.5 $\pm$ 0.6 & 18.4 $\pm$ 4.2  & 19.0 $\pm$ 2.9  &  11.3 $\pm$ 0.4  &  1.0 $\pm$ 0 \\
  \hline
  \end{tabular}}
  \label{tab:attack-safety-augmentation-main}
  \end{table}

%As we have noted in Section~\ref{subsec:shallow_alignment_vulnerabilities}, the shallow safety alignment can be a source of many safety vulnerabilities in LLMs. 

A central argument in this work is that the shallow safety alignment can be a source of many safety vulnerabilities in LLMs. As a counterfactual, now we evaluate the Llama-2-7B-Augmented model against the range of exploits we discuss in Section~\ref{subsec:shallow_alignment_vulnerabilities}, verifying whether a deepened safety alignment can be superior in mitigating these vulnerabilities. 

\textbf{Improved Robustness against Multiple Inference-time Exploits.} We test the prefilling attack~(using the Harmful HEx-PHI dataset we built in Section~\ref{sec:shallow_alignment}), GCG attack~\cite{zou2023universal}, and the decoding parameters exploit~\cite{huang2023catastrophic} on the Llama-2-7B-Chat-Augmented model. Each of the attacks corresponds to one type of inference-stage exploits that we review in Section~\ref{subsubsec:inference_stage_vulnerabilities}. We document the implementation details of the there attacks in Appendix~\ref{appendix-subsec:experiment_details-inference-stage-attacks}. Table~\ref{tab:attack-safety-augmentation-main} compares the attack success rates~(ASRs) on the Llama-2-7B-Chat-Augmented model with the initial Llama-2-7B-Chat model. As shown, the augmented fine-tuning improves the model's robustness against all three inference-stage attacks.

\textbf{Does the Augmentation Improve Durability against Fine-tuning Attacks?} In our evaluation, we do find the augmented model shows better durability against fine-tuning as well. Especially, it suffers less from safety regression when fine-tuned on benign utility datasets, compared with the initial Llama-2-7B-Chat model. Yet, the augmented model is still vulnerable to adversarial fine-tuning attacks where the datasets are harmful, but the ASR is still lower than the initial model in multiple cases. We defer the detailed results to Appendix~\ref{appendix:finetuning_ablation}.

\section{What If The Initial Tokens Were Protected Against Fine-tuning Attacks?}
\label{sec:constraining}

The per-token dynamics of fine-tuning attacks that we analyze in Section~\ref{subsubsec:finetuning_stage_vulnerabilities} suggest that the safety failure after follow-up fine-tuning could be largely attributed to the distribution shift at only the first few tokens. Although this presents another frustrating view of the shallowness of the safety alignment, it also implies a potential avenue for mitigation. Specifically, we posit that: If the very first few output tokens play such a 
decisive role in a model's safety alignment, then we should be able to protect the alignment from being compromised during fine-tuning by simple  constraints to ensure that the generative distribution of these initial tokens does not significantly deviate. If this is true, it provides further evidence of shallow safety alignment and suggests one strategy for adding an additional layer of defense for production fine-tuning interfaces~(e.g., \citet{openaiFinetune}).

\subsection{A Token-wise Constrained Objective for Custom Fine-tuning Aligned LLMs}
\label{subsec:token-wise-objective-intro}

\begin{comment}
\xiangyu{Consdering an alternative heading: what if the first few tokens are constrained during Fine-tuning?}

\ahmad{Per offline discussion, you may want to consider changing the narrative to pitching that the fine-tuning objective should be adapted to the alignment ``signature'' (not sure what the right word is). For example, let's consider the case where alignment is shallow. Can we still devise a finetuning objective that is effective yet preserves the alignment. And then end the section with, a more systematic design of finetuning rooted in alignment is a promising avenue for research.}

\ahmad{I think one question that you will probably get is: why did you not do (KL-)regularized training, e.g., $D_{\text{KL}}(\pi_\theta \| \pi_{aligned})$ to help with finetuning task while maintaining alignment as a baseline?}

\peter{Yeah, I agree. We've discussed this before I think. More specifically, why not just constrain $\mathbb{E}_{(x,y) \sim \mathcal{D_{\text{unsafe}}}}D_{\text{KL}}(\pi_\theta \| \pi_{aligned})$ where $D_{\text{unsafe}}$ are harmful prompts. May want to consider putting this on the timeline for rebuttal.}
\end{comment}

To further test our hypothesis, we devise the following fine-tuning objective---inspired in part by approaches like Direct Preference Optimization (DPO)~\citep{rafailov2023direct} and Kahneman-Tversky Optimization (KTO)~\citep{ethayarajh2024kto}--- but adapted to control the deviation from the initial generative distribution for each token position, similarly to the token-wise RL objective in~\citep{mudgal2024controlled}:
\begin{align}
    \min_{\theta} \Bigg\{ \ \ \mathop{\mathbb E}_{(\bx,\by)\sim D} -  \sum_{t=1}^{|\by|} \frac{2}{\beta_t} \log\Bigg[\sigma \bigg( \beta_t \log \frac{\pi_{\theta}\big(y_t \mid \bx, \by_{<t} \big)}{\pialigned\big(y_t \mid \bx, \by_{<t} \big)}  \bigg)  \Bigg] \ \ \Bigg\},
\label{eqn:soft-sft-fine-tuning-loss}
\end{align}
where $\sigma(z) := \frac{1}{1 + e^{-z}}$ is the sigmoid function and $\beta_t$ is a constant parameter at each token position to control the speed of the saturation of the sigmoid.
Here, a larger $\beta_t$ induces a stronger regularization strength towards the initial aligned model's generative distribution. See below for the interpretation of the proposed objective. 
%\xiangyu{!! A rectification here: the loss is multiplied by 2 in experiments to cancel out the $1/2$ factor introduced by $\sigma$. Following derivation parts need to be adjusted accordingly!!}

\textbf{Interpretation of Our Objective.}
To see why $\beta_t$ can be used to control the deviation of the generative distribution at each token position, we can rewrite the fine-tuning objective as:
\begin{align}
    \min_{\theta} \Bigg\{ \sum_{t \ge 1} \mathop{\mathbb E}_{(\bx,\by)\sim D}\!\Bigg[ \onec{t \le |\by|} \cdot \frac{2}{\beta_t} S\Big[\beta_t \big(\underbrace{\log \pialigned\big(y_t \mid \bx, \by_{<t} \big) - \log \pi_{\theta}\big(y_t \mid \bx, \by_{<t} \big)}_{=: \Delta_t(\bx, \by_{<t}, y_t)} \big) \Big] \Bigg] \Bigg\},
\end{align}
where $S(z) := \log(1 + e^{z})$ is the softplus function~\citep{dugas2000incorporating}.
At token position $t$, the loss is essentially $\Delta_t(\bx, \by_{<t}, y_t)$ (defined above) wrapped by the softplus function $S(\cdot)$ after being rescaled by $\beta_t$.
When $\beta_t$ is small, $S(\beta_t z) \approx S(0) + \beta_t S'(0) z = \log 2 + \frac{\beta_t}{2} z$,
so $\frac{2}{\beta_t}S(\beta_t z)$ is approximately equal to $- \log \pi_{\theta}\big(y_t \mid \bx, \by_{<t} \big)$ after shifting by a constant.
This means minimizing our objective is approximately the same as minimizing the cross-entropy loss 
$\mathop{\mathbb E}_{(\bx,\by)\sim D} \left[ - \onec{t \le |\by|} \cdot  \log \pi_{\theta}\big(y_t \mid \bx, \by_{<t} \big) \right]$. % up to affine transformation. 
Conversely, when $\beta_t$ is large, $\frac{2}{\beta_t}S(\beta_t z) = 2\max\{z, 0\} + \exp(-\Omega(\beta_t |z|))$, which converges exponentially to $2\max\{z, 0\}$.
The loss can then be approximated by $\mathop{\mathbb E}_{(\bx,\by)\sim D}\left[ \onec{t \le |\by|} \cdot \max\{ \Delta_t(\bx, \by_{<t}, y_t), 0 \}\right]$.
% \peter{The commented out part is correct, but it's a bit hard to parse/lengthy and the next paragraph drives home the point well.}
% which attains the minimum $0$ only if $\Delta_t(\bx, \by_{<t}, y_t) \le 0$ holds for all tuples $(\bx, \by_{<t}, y_t)$ that can be sampled from $D$, i.e., $\pi_{\theta}\big[y_t \mid \bx, \by_{<t} \big] \le \pialigned\big[y_t \mid \bx, \by_{<t} \big]$.
% If every token in the vocabulary can be sampled as $y_t$, then 
% $\pi_{\theta}\big[y_t \mid \bx, \by_{<t} \big] = \pialigned\big[y_t \mid \bx, \by_{<t} \big]$ holds for all $y_t$ because for two probability distributions that are different from each other, there must exist an element such that its probability in the first distribution is larger than that in the second distribution.
\textbf{This shows that small $\beta_t$ places emphasis on minimizing the cross-entropy loss, while large $\beta_t$ places emphasis on matching the generative distribution to the initial aligned model. }
% \peter{I replaced bolding with emph since the paragraph sections are already bolded}
% Taking a moderate value of $\beta_t$ then interpolates between these two regimes.
% PH: commented the above, the two approximations make assumptions that may not hold at moderate values, current wording makes it seem like we think approximations will continue as they are for moderate values, though I get what the sentence is saying, I think we can just push to appendix so nuance isn't lost.
In~\Cref{appendix:loss_limiting}, we provide a detailed derivation of these limiting behaviors for large and small $\beta_t$ and further characterization for the gradient when $\beta_t$ takes a moderate value.

\textbf{Gradient of Our Objective.} Our objective can also be interpreted by its gradient. The token-wise gradient of the objective on each data point $(\bx, \by)$ with $|\by| \ge t$ can be derived as:
% \begin{align}
%     \nabla \left( \frac{2}{\beta_t} S(\beta_t \Delta_t(\bx, \by_{<t}, y_t)) \right)
%     &= 
%     2 S'(\beta_t \Delta_t(\bx, \by_{<t}, y_t)) (-\nabla \log \pi_{\theta}\big(y_t \mid \bx, \by_{<t})) \nonumber\\
%     &= 2  \sigma(\beta_t \Delta_t(\bx, \by_{<t}, y_t)) (-\nabla \log \pi_{\theta}\big(y_t \mid \bx, \by_{<t}))
%     %&  -  \sum_{t=1}^{|\by|} w_t \times \nabla \ \log \ \pi_{\theta}\big[y_t|\bx, y_1,...,y_{t-1}\big],\ w_t := \Bigg[1 - \sigma\Bigg( \beta \cdot \log \frac{\pi_{\theta}\big[y_t|\bx, y_1,...,y_{t-1}\big]}{\pi_{aligned}\big[y_t|\bx, y_1,...,y_{t-1}\big]}\Bigg)\Bigg]
% \end{align}
\begin{equation}
    \nabla \left( \frac{2}{\beta_t} S(\beta_t \Delta_t(\bx, \by_{<t}, y_t)) \right)
    = - 2  \sigma(\beta_t \Delta_t(\bx, \by_{<t}, y_t)) \nabla \log \pi_{\theta}\big(y_t \mid \bx, \by_{<t}).
\end{equation}
Note that the gradient of the cross-entropy loss is $-\nabla \log \pi_{\theta}\big(y_t \mid \bx, \by_{<t}\big)$, our fine-tuning objective essentially applies an additional adaptive weight $w_t := 2\sigma(\beta_t \Delta_t(\bx, \by_{<t}, y_t))$ for the token-wise gradient term of cross-entropy. The weight $w_t$ diminishes as $-\Delta_t(\bx, \by_{<t}, y_t) := \log \pi_{\theta}\big(y_t \mid \bx, \by_{<t} \big) - \log \pialigned\big(y_t \mid \bx, \by_{<t} \big) $ becomes large.
This difference in log probabilities essentially characterizes the deviation between the fine-tuned model $\pi_{\theta}$ and the initially aligned model $\pi_{aligned}$ at the token position $t$ (see also~\Cref{thm:relu_loss_min}).
Taking together, we can see that the fine-tuning objective will adaptively diminish the weight of those token positions where the deviation of the distribution approaches a certain threshold (controlled by $\beta_t$), thereby constraining it from further deviation (by diminishing the gradient at this position). Note that, at the beginning of the fine-tuning when $\pi_\theta$ is initialized as $\pialigned$, the weight $w_t = 1$, and the gradient of the loss is equivalent to that of standard cross-entropy loss.\footnote{This is also why we multiply a constant factor $\frac{2}{\beta_t}$ to the loss. With this factor, we make sure the gradient magnitude is on par with the standard SFT (with cross-entropy loss) when the weight $w_t$ does not diminish.}

\textbf{Interpretation from A Reinforcement Learning Perspective.} In Appendix~\ref{appendix:loss-rl-perspective}, we further show how the loss function in Eqn~\ref{eqn:soft-sft-fine-tuning-loss} can also be derived from a KL-regularized reinforcement learning objective with $\beta_t$ controlling the strength of the KL regularizes at each different positions $t$. Under this RL-based interpretation, a larger $\beta_t$ essentially denotes a larger weight for the token-wise KL regularization terms, representing a stronger constraint enforcing that the token-wise generative distribution at the position $t$ does not deviate much from that of the initial model $\pialigned$.

\subsection{Experiments}

\textbf{Configurations of $\beta_t$.} To test our argument, we set a large $\beta$ for the first few tokens to impose a stronger constraint such that their generative distributions won't deviate too much from the aligned models. This leads to the implementation of larger $\beta_t$ as $\beta_1 = 0.5$, $\beta_t = 2$ for $2 \le t \le 5$ at the initial 5 tokens, while a much weaker constraint $\beta_t = 0.1$ for $t > 5$ at the later tokens. 

\textbf{Fine-tuning Attacks.} We test this constrained objective against \textit{three fine-tuning attacks} from \citet{qi2023fine} --- \textit{1) Harmful Examples:} fine-tuning with 100 (harmful input, harmful answer) pairs; \textit{2) Identity Shifting:} fine-tuning the model to self-identify as an absolutely obedient agent, and always answer questions with affirmative prefix; \textit{3) Backdoor Poisoning:} fine-tuning the model on a mixture of 100 (harmful input, refusal answer) pairs plus 100 (harmful input + a backdoor trigger, harmful answer) pairs. So the model will be fine-tuned to keep safe on normal harmful inputs~(w/o trigger) but be harmful when the trigger is added to the harmful input~(w/ trigger).

\textbf{Benign Fine-tuning.}  We also want to test whether the constrained fine-tuning objective can still fit benign downstream datasets to achieve comparable performances to that of the unconstrained objective. So, we experiment with \textit{three benign fine-tuning use cases} as well, including Samsum~\cite{samsum}, SQL Create Context~\cite{b-mc2_2023_sql-create-context} and GSM8k~\cite{cobbe2021gsm8k}.

\begin{table}[!htbp]
\caption{%\peter{Can we increase the font size of this table? it's so small. Can we also add what the +/- is for the caption?}
Fine-tuning with The Constrained Objective in Eqn~\ref{eqn:soft-sft-fine-tuning-loss}, with larger constraints $\beta_1 = 0.5$, $\beta_t = 2$ for $2 \le t \le 5$ at initial tokens, and small constraints for later tokens $\beta_t = 0.1$ for $t > 5$.}
\centering
  \resizebox{1.0\linewidth}{!}{

\begin{tabular}{c|c|c|c|c|c|c|c|c}
\hline
\multicolumn{2}{c|}{Models $\rightarrow$} &  \multicolumn{3}{c|}{Llama-2-7B-Chat} & & \multicolumn{3}{c}{Gemma-1.1-7B-IT}\\
\hline
\multirow{2}*{\shortstack{Datasets $\downarrow$}} & \multirow{2}*{\shortstack{\textbf{mean $\pm$ std }(\%)\\\textbf{(over 3 rounds)}}} & \multirow{2}*{\shortstack{Initial}} & \multirow{2}*{\shortstack{Standard\\SFT}} & \multirow{2}*{\shortstack{\textbf{Constrained}\\\textbf{SFT (ours)}}} & & \multirow{2}*{\shortstack{Initial}} & \multirow{2}*{\shortstack{Standard\\SFT}} & \multirow{2}*{\shortstack{\textbf{Constrained}\\\textbf{SFT (ours)}}} \\ 
&   &  &  &  &   &  & &\\
\hline
\multicolumn{9}{c}{\textit{Against Fine-tuning Attacks } }\\
\hline
Harmful Examples & ASR   &  1.5 $\pm$ 0.2  &  88.9 $\pm$ 1.2  &  4.6 $\pm$ 0.5  & &  1.8 $\pm$ 0.3 &  81.6 $\pm$ 2.9 & 1.9 $\pm$ 0.2\\
\hline
Identity Shifting & ASR   &  0 $\pm$ 0 & 
 79.5 $\pm$ 2.3   &   8.1 $\pm$ 0.1 & &  0 $\pm$ 0 &  83.6 $\pm$ 2.5 & 9.1 $\pm$ 1.7 \\
%\hline
%Data Selection~\citep{he2024s} &  ASR  & 3.6\%  &    &    &    &  \\
\hline
\multirow{2}*{\shortstack{Backdoor\\Poisoning}}  & ASR (w/o trigger)   & 1.5 $\pm$ 0.2   & 7.6 $\pm$ 1.1   &  1.9 $\pm$ 0.2 & &  1.8 $\pm$ 0.3 &  2.0 $\pm$ 0.2 & 1.5 $\pm$ 0.1 \\
 & ASR (w/ trigger)  &  1.7 $\pm$ 0.1  & 90.9 $\pm$ 1.4  & 10.9 $\pm$ 2.8  & & 1.8 $\pm$ 0.3  &  82.3 $\pm$ 1.1 & 1.9 $\pm$ 0.8\\
\hline
\multicolumn{9}{c}{\textit{Fine-tuning with Normal Downstream Datasets}}\\
\hline
\multirow{2}*{\shortstack{Samsum}} & ASR  & 1.5 $\pm$ 0.2  &   23.4 $\pm$ 2.5  &   3.2 $\pm$ 0.8 & &  1.8 $\pm$ 0.3 &  2.0 $\pm$ 0.2 & 2.4 $\pm$ 0.3 \\
&  Utility & 25.5 $\pm$ 0.3 & 51.7 $\pm$ 0.5  &   50.1 $\pm$ 0.2  & &  36.0 $\pm$ 1.4 &  51.5 $\pm$ 0.3 & 51.9 $\pm$ 0.5 \\
\hline
\multirow{2}*{\shortstack{SQL Create Context}} & ASR  & 1.5 $\pm$ 0.2 & 15.4 $\pm$ 1.4 & 3.2 $\pm$ 0.8   & &  1.8 $\pm$ 0.3 &  2.8 $\pm$ 0.2 & 2.4 $\pm$ 0.1 \\
&  Utility & 14.9 $\pm$ 0.4 & 99.1 $\pm$ 0.2  &  98.5 $\pm$ 0.1  & &  88.0 $\pm$ 0.5 &  99.2 $\pm$ 0.1 & 98.6 $\pm$ 0.3 \\
\hline
\multirow{2}*{\shortstack{GSM8k}} & ASR  & 1.5 $\pm$ 0.2 & 3.3 $\pm$ 0.4 &   2.0 $\pm$ 0.5  & &  1.8 $\pm$ 0.3 &  2.9 $\pm$ 0.2 & 1.7 $\pm$ 0.4 \\
&  Utility & {25.5} $\pm$ 0.2 & 41.7 $\pm$ 0.4 &  37.4 $\pm$ 0.3 & &  28.5 $\pm$ 1.2 &  63.3 $\pm$ 0.5 & 63.6 $\pm$ 0.4 \\
\hline
\end{tabular}}
\label{tab:main}
\end{table}

\textbf{Imposing Strong Constraints on Initial Tokens Mitigate Fine-tuning Attacks.} Table~\ref{tab:main} summarizes our results of fine-tuning Llama-2-7B-Chat and Gemma-1.1-7B-IT with the proposed constrained fine-tuning objective. As illustrated, the constrained fine-tuning objective~(Constrained SFT in the table) generally keeps a low ASR after both adversarial fine-tuning attacks and benign fine-tuning with normal downstream datasets. This suggests that the safety alignment can indeed be more persistent against fine-tuning if we can properly apply a tight constraint to prevent the distribution of early tokens from deviating too much from the initial models.

%\begin{wrapfigure}{l}{0.45\textwidth}
%  \vspace{-2em}
%  \begin{center}
%  \includegraphics[width=\linewidth]{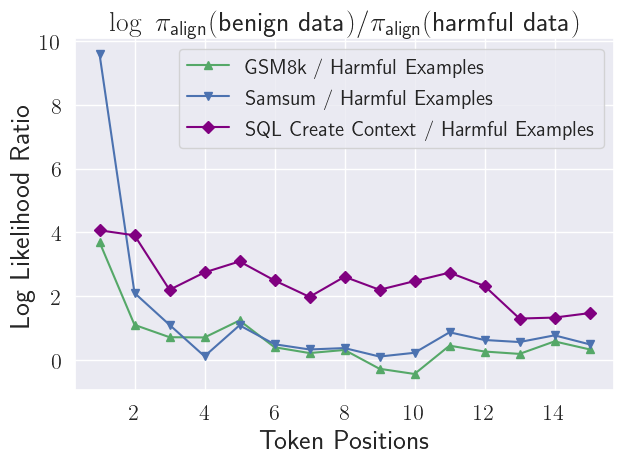}
%  \end{center}
%  \vspace{-1em}
%  \caption{Log ratio between the  }
%  \vspace{-1.0em}
%  \label{fig:finetuning_log_p_ratio}
%\end{wrapfigure} 

%\textbf{Comparable Utility Using The Constrained Loss.} In Table~\ref{tab:main}, we also report a utility metric for benign fine-tuning use cases. On Samsum and SQL Create Context, we report the standard ROUGE-1 score used for these datasets. On GMS8k, we report the accuracy of the answer. As shown, in all three use cases, both standard SFT and constrained SFT can improve the utility compared with the initial model. Notably, constrained SFT achieves comparable utility to standard SFT while mitigating the risk of harmful fine-tuning. This finding suggests that constraining initial tokens offers significant advantages in maintaining model safety, without significantly compromising the model's ability to leverage fine-tuning for enhanced utility in many downstream tasks. This is a meaningful insight that may be taken use of to build an additional layer of protection for production fine-tuning interfaces such as \citet{openaiFinetune} --- fine-tuning interfaces providers often want to allow users to custom fine-tune their models for downstream usage while not compromising the safety alignment of these models.

\textbf{Comparable Utility Using The Constrained Loss.} In Table~\ref{tab:main}, we also report utility metrics for benign fine-tuning use cases, employing the standard ROUGE-1 score for Samsum and SQL Create Context, and answer accuracy for GSM8k, consistent with established practices for these datasets. As shown, both standard SFT and constrained SFT improve utility compared to the initial model across all three cases. Notably, constrained SFT achieves comparable utility to standard SFT while mitigating the risk of harmful fine-tuning. These results suggest that constraining initial tokens offers significant advantages in maintaining model safety, without significantly compromising the model's ability to still leverage fine-tuning for enhanced utility in many downstream tasks. This is a meaningful insight that may be leveraged to build an additional layer of protection for production fine-tuning interfaces such as OpenAI's Finetuning API~\cite{openaiFinetune}. Since fine-tuning interface providers want to allow their users to customize their models for downstream usage while not breaking the safety alignment, they should consider enforcing such more restrictive fine-tuning objectives that are strategically designed to protect safety alignment while allowing customizability.

%This suggests that, while constraining the initial tokens are highly beneficial for keeping the model's safety performance, the model can still meaningfully use fine-tuning to improve its utility in downstream tasks. 

%and constrained SFT achieves comparable utility to standard SFT --- while it can prevent harmful fine-tuning.

\textbf{Experiment Details and More Ablation Studies.} The full implementation details of this experiment can be found in Appendix~\ref{appendix-subsec:experiment_details-fine-tuning-attacks}. Besides, to further validate that the improvement in Table~\ref{tab:main} is indeed benefiting from the stronger constraints (larger $\beta_t$) biased to the first 5 tokens, we also provide further ablation studies on the choice of $\beta_t$ in Appendix~\ref{appendix:finetuning_ablation}.

%In Appendix~\ref{appendix:finetuning_ablation}, we provide further ablation studies with uniform $\beta = 0.1, 0.5, 2.0$ respectively for the constrained fine-tuning objective. We will show that a small constraint $\beta = 0.1$ uniformly for all tokens can not mitigate fine-tuning attacks, while $\beta = 0.5$ uniformly achieves both sub-optimal ASR and utility, and the largest $\beta = 2$ applied for all tokens often lead to mode collapse. 

% We find, as seen in Table 

% As a final note, the form of our fine-tuning objective is inspired by Direct Preference Optimization (DPO)~\citep{rafailov2023direct} for aligning a model to human preference data, which also wraps log probabilities with $\log \sigma(\cdot)$.
% \peter{You have to call out the similarity to KTO. Again, we should be cautious about pitching this as an exciting new defense/method. So anywhere we can tone it down, would be great.}
% But here we use log probability only on the positive samples from the training data and do not contrast the model's generative distribution between a pair of answers according to human preference.

% \input{sections/finetuning_with_token_bias}

%\input{sections/safety_depth}

\section{Related Work}
\label{sec:related}

\textbf{Safety \& Alignment.} A large body of work has examined improved methods for alignment~\citep{rafailov2023direct,ethayarajh2024kto,zou2023representation,bai2022constitutional,ouyang2022training,touvron2023llama-2,team2024gemma}. While we tie alignment approaches to potential downstream jailbreaks, we do so mainly be examining aligned artifacts which go through more rigorous alignment procedures than most other open source models. We rely on the Gemma~\citep{team2024gemma} and Llama-2~\citep{touvron2023llama-2} base and aligned models throughout this work, as the safety alignment built in these two models are closest to the technology applied in frontier proprietary models.

\textbf{Jailbreaking Methods.} A large body of work has examined methods for jailbreaking aligned LLMs, including leveraging finetuning, decoding strategies, prefilling strategies, optimization strategies, and even persuasion~\citep{andriushchenko2024jailbreaking,zou2023universal,qi2023fine,huang2023catastrophic,zeng2024johnny,zhan2023removing,yang2023shadow,gade2023badllama,prefilling-attack,qi2023visual}. Many approaches try to handle jailbreaking through a systems approach by monitoring inputs and outputs with machine learning models~\citep{inan2023llama}, but this is only as good as the monitoring mechanism (which can also be jailbroken)~\citep{break-llama-guard}.

\textbf{Superficial Alignment Hypothesis and Per-token Effects of Alignment Fine-tuning.} Our work is closely related to the Superficial Alignment Hypothesis~(SAH)~\cite{zhou2023lima}, which posits that the alignment process for current LLMs only superficially changes the formats of inputs and outputs to be used for interacting with the users. %It is also relevant to \citet{lin2024unlocking}, which also finds that the alignment's effects decay for later token outputs.
Besides, there are also some earlier works noting asymmetries in the representation power and utility of different tokens during adaptation. For example, \citet{zhang2024dissecting} show that ``the adaptation of topic and style priors'' during finetuning are ``learned independently and primarily at the beginning of a text sequence.'' \citet{lin2024unlocking} also find that differences between aligned and unaligned base models introduced by the alignment fine-tuning vanishes as the sequence goes longer~(similar to the effect that we observe in Figure~\ref{fig:harmful-hexphi-kl}, though theirs are not in safety-specific contexts). Particularly, while \citet{lin2024unlocking} primarily tie such effects to question whether fine-tuning is even necessary to achieve current level of alignment~(e.g., in-context learning may already suffice to achieve a comparable level of alignment), we go much deeper into investigating the safety-specfic effects of this phenomenon and tie it to multiple downstream attacks and training-based mitigations. Besides, we find \citet{zhao2024weak} also note a similar token-wise effect, and they explicitly exploit this effect to design jailbreak attacks. Others have investigated fine-tuning dynamics through interpretability or pruning methods, which is distinct but somewhat related to the approach we take here~\citep{wei2024assessing,jain2023mechanistically}.

\textbf{Protecting The Safety Alignment at Initial Token Positions.} In Section~\ref{sec:constraining}, one important insight that motivates the design of our constrained fine-tuning loss is that ``safety alignment would be more difficult to be circumvented if we can protect the generative distribution of the model at the early token positions.'' We note that \citet{xu2024safedecoding} share a similar insight to ours in this regard. They find it is possible to design a defense against inference-time jailbreak attacks by simply identifying safety disclaimers and amplifying their token probabilities at the early token positions.

\textbf{Connections to Control Theory and Safe Reinforcement Learning.} Our data augmentation approach in Section~\ref{sec:make_alignment_deeper} relates to exploration requirements for optimal learning via policy gradient methods~\citep{agarwal2021theory}, learning recovery policies~\citep{thananjeyan2021recovery}, and safe control theory~\citep{burridge1999sequential,ames2019control}. We, however, leave deeper connections to this literature for future work.

\textbf{Other Notions of Safety Depth.} We also note that safety ``depth'' may be multi-dimensional in addition to token-based depth we describe here. Other considerations for depth would be the ability for models to retain safety properties after adaptation that some have previously discussed~\citep{henderson2023self,wei2024assessing,qi2023fine}.

\section{Conclusion}
\label{sec:conclusion}

Our work identifies a shortcut that current safety alignment strategies appear to exploit: that alignment only needs to change the generative distribution of the first few tokens. We show that this may be a key component of many downstream vulnerabilities. We then provide two initial strategies to address this: (1) a data augmentation approach that can increase depth of alignment; (2) a constrained optimization objective that can help mitigate finetuning attacks by constraining updates on initial tokens. Future work can explore additional approaches grounded in control theory and safe reinforcement learning. 
The methods we describe here may not be a perfect defense and may be subject to some future adaptive attacks, but they are an initial step for improving robustness and further demonstrate how much improvement there can be over current approaches. Fundamentally, our results are primarily to support our argument that future safety alignment should be made more than just a few tokens deep.

\section*{Broader Impacts and Ethics Statement}

Our work explicitly ties failure modes of safety alignment to potential shortcuts that can be taken by alignment methods and advocates for, and provides a path forward for, deeper alignment approaches that will improve safety more broadly. While a deeper understanding of the failures of alignment may result in increased ability to jailbreak models, we believe that open investigations of such failure modes are important for strengthening the safety of future models and broadly ensuring that models have positive societal impact. The proposed prototype approaches (in this work) for strengthening the alignment in current LLMs also contribute to the broader agenda of building safer and secure AI.

\section*{Acknowledgement}

%\xiangyu{please drop in your acknowledgement if any}

We thank Kaixuan Huang, Zixuan Wang, Dingli Yu, Haoyu Zhao at Princeton University for their early discussions and feedback to this project. This work is supported by Princeton Language and Intelligence (PLI) Compute Cluster and Center for AI Safety~(CAIS) Compute Cluster. Xiangyu Qi and Ashwinee Panda are supported by a Superalignment Fast Grant from OpenAI, and Xiangyu Qi is also supported by the Princeton Gordon Y. S. Wu Fellowship. Prateek Mittal acknowledges the Princeton SEAS Innovation Grant. Peter Henderson acknowledges the Foundational Research Grants program at Georgetown University’s Center for Security and Emerging Technology. Any opinions, findings, conclusions, or recommendations expressed in this material are those of the author(s) and do not necessarily reflect the views of the funding agencies.

%\newpage

\bibliographystyle{unsrtnat}
\bibliography{reference}

%%%%%%%%%%%%%%%%%%%%%%%%%%%%%%%%%%%%%%%%%%%%%%%%%%%%%%%%%%%%

\clearpage 
\appendix

\section{Detailed Setups of Our Experiments}
\label{appendix:experiment_details}

\subsection{Compute Resources}

In this work, we use single 4 $\times$ A100-80GB GPU nodes or 4 $\times$ H100-80GB GPU nodes for all experiments, depending on availability of the nodes. On each node, our experiments use up to 8 CPU cores and 256GB memory, but overall the experiments are not CPU intensive tasks.
% \peter{TODO: let's add more stuff here.}
%\xiangyu{Comprehensively specify all the experiment details.}

%\kaifeng{remember to mention that the loss is averaged over all the tokens in the same batch}

\subsection{General Configurations}

\textbf{Decoding Parameters.} Throughout this paper, we use the top-p sampling with a temperate of 0.9 and a top-p parameter of 0.6 by default for decoding outputs from LLMs in our experiments. The only case where we do not follow this default configuration is the decoding parameters exploit experiment where the exploit itself needs to take use of different parameters by its design~\cite{huang2023catastrophic}.

\textbf{Safety Evaluation.} As also mentioned in Section~\ref{subsec:preliminaries}, we use the GPT-4 based judge to evaluate the safety of model outputs, following the setup of \citet{qi2023fine}. Specifically, in such an evaluation pipeline, we pass (input, output) pairs to the GPT-4-Turbo model, and prompt the model to evaluate the harmfulness level of the output. The model will output a score ranging from 1 to 5, with higher score indicating being more harmful. When reporting ASR in our experiments, we report the ratio of outputs that get the highest score 5~(identical to the harmfulness rate metric in \citet{qi2023fine}). By default, HEx-PHI safety benchmark~\cite{hexphi} is used for safety evaluation. The only exception is the experiments in Table~\ref{tab:attack-safety-augmentation-main}, where we add additional evaluation on AdvBench for GCG attack evaluation~\cite{zou2023universal} and MaliciousInstruct for decoding parameters exploit~\cite{huang2023catastrophic}. These two additional safety evaluation datasets are used in the original papers of the two work, and we report results on both HEx-PHI and these additional safety evaluation datasets for a more complete reference.

\subsection{Details of Data Augmentation Experiments}
\label{appendix-subsec:experiment_details-data-augmentation}

Here, we describe the implementation details of the data augmentation experiments in Section~\ref{subsec:data_augmentation_approach}.

\textbf{Safety Data.} As noted in Eqn~\ref{eqn:data-augmentation-objective}, we use a safety dataset $D_H$ to keep generating safety recovery examples. To construct it, we first collect 256 harmful instructions. These instructions are mostly collected from the red-teaming data provided by~\citet{ganguli2022red}. We make sure they do not overlap with any of the safety evaluation datasets that we used in this paper, i.e., HEx-PHI~\cite{hexphi}, AdvBench~\cite{zou2023universal}, and MaliciousInstruct~\cite{huang2023catastrophic}. Then, we generate refusal answers for each harmful instruction using the initial Llama-2-7B-Chat model. We also collect the corresponding harmful responses for these instructions using a jailbroken version of the model (jailbroken through fine-tuning attacks per \citet{qi2023fine}). This results in the dataset $D_H$ with 256 examples of triplets $(\bx, \bm{h}, \bm{r})$.

\textbf{Utility Data.} To prevent the decrease of model utility during the data augmentation fine-tuning, we also take benign instructions from the Alpaca~\cite{alpaca} dataset. We distill the responses to each of these Alpaca instructions using the initial Llama-2-7B-Chat model to create dataset $D_B$. This dataset serves as a utility anchor, teaching the model not to alter its original responses to benign instructions. 

\textbf{Training details with Eqn~\ref{eqn:data-augmentation-objective}.} In the implementation of the augmentation fine-tuning as per Eqn~\ref{eqn:data-augmentation-objective}, we set the number of prefilled tokens $k$ to follow a distribution $\mathcal{P}_k$, where $k=0$ with a $50\%$ probability, and $k$ is uniformly sampled from $[1, 100]$ with a $50\%$ probability. We set $\alpha = 0.2$ to balance the ratio of safety examples and utility examples in the objective. In batch-wise training, this is implemented by randomly sampling 16 examples from $D_H$ and 64 examples from $D_B$ in each batch. Using this objective, we train the model for 10 epochs on $D_H$ with a learning rate of $2\times 10^{-5}$ using the AdamW optimizer (with the default configurations of the optimizer).

\textbf{AlpacaEval.} We also evaluate the utility of the augmented fine-tuned model with AlpacaEval~\cite{alpaca_eval}, which is reported as a winrate against the text-davinci-003 model~(the default reference baseline for AlpacaEval). Specifically, we use the 1.0 version of AlpacaEval without length control. To ensure the setup is consistent with the safety evaluation, the official system prompt (with a focus on safety) of Llama-2-7B-Chat is used when running AlpacaEval. We note that the win rates here can therefore be generally lower than the numbers in official Alpaca leaderboard in which the safety system prompt is not applied. Under this evaluation, we note that the augmented fine-tuned model achieves a winrate of $49.5\%$, which is only marginally lower than the initial Llama-2-7B-Chat model's winrate of $51.8\%$.

\textbf{Limitations.} Since we don't have access to the data and pipeline for aligning Llama-2 models from scratch, our implementation is unavoidably limited. As also specified in Section~\ref{subsec:data_augmentation_approach}, rather than doing the alignment training from scratch, alternatively, in implementation, we experiment by directly fine-tuning the already aligned Llama-2-7B-Chat model further with a mixture of the augmented safety recovery examples~($D_H$) and utility examples~($D_B$). This implementation is inherently sub-optimal. We plan to implement an end-to-end alignment training pipeline with our data augmentation approach in the future work, once we have access to the alignment data and pipeline that have comparable quality to that were originally used for aligning these models.

\subsection{Details of Inference-Stage Attacks Experiments}
\label{appendix-subsec:experiment_details-inference-stage-attacks}

We have tested three inference-stage attacks in Section~\ref{subsec:evaluate_data_augmentation}, i.e., prefilling attack, GCG attack~\cite{zou2023universal}, and decoding parameters exploit~\cite{huang2023catastrophic}. We specify the details here.

\textbf{Prefilling Attack.} Our implementation of the prefilling attack generally follows the setup that we specify in Section~\ref{subsubsec:inference_stage_vulnerabilities}. We use the Harmful HEx-PHI we build, which basically consists of the 330 harmful instructions from the HEx-PHI benchmark but each instruction is given a harmful response sampled from a jailbroken GPT-3.5-Turbo model. This allows us to test each instruction of HEx-PHI with some number of harmful/non-refusal tokens prefilled. Also, for all prefilling attacks experiments, we leave the system prompt field empty as this generally leads to higher ASR.

\textbf{GCG Attack.} In the implementation of GCG attacks, we adopt the single model + multiple harmful behaviors setup from the original paper by \citet{zou2023universal}. Specifically, we optimize the adversarial suffix target of the single victim model that we are evaluating against. We train the adversarial suffix over 50 harmful behaviors data points. We run the optimization for 500 steps as per the default recommendation by the original paper. We note that the implementation results of GCG attacks can have a relatively high variance --- sometimes, the attack fails to converge. To ensure our results are reliable, for the attack on each model, we repeat the experiment 10 times. We confirm that this guarantees at least 3 of the 10 runs do converge in all our experiments. In Table~\ref{tab:attack-safety-augmentation-main}, when we report the ASR, we report the mean $\pm$ a standard deviation over the 3 most successful (with highest ASR) runs out of the 10 runs. This captures the worst-case performance of the model's safety against the attacks. Also, as shown in Table~\ref{tab:attack-safety-augmentation-main}, we report the Attack Success Rate on both HEx-PHI (the default safety evaluation dataset) and AdvBench (the safety evaluation dataset used in the original paper). Following the setup of \citet{zou2023universal}, the official system prompt is used when doing the attack.

\textbf{Decoding Parameters Exploit.} We follow the same setup as per \citet{huang2023catastrophic}. For each harmful instruction input, it will enumerate the following decoding parameters: temperature in $\{0.05, 0.1, 0.15, 0.2, 0.25, ..., 0.95, 1.0\}$, top-k in $\{1, 2, 5, 10, 20, 50, 100, 200, 500\}$, and top-p in $\{0.05, 0.1, 0.15, 0.2, 0.25, ..., 0.95, 1.0\}$. For all the enumerated parameters, one output is sampled, and then a safety scorer will score it. The most unsafe output will be picked. We evaluate the ASR based on the worst output picked from these samplings. Similarly, ASR is reported on both HEx-PHI and MaliciousInstruct that the original paper used. Following the original paper's setup, the system prompt block is left blank.

\subsection{Details of Fine-tuning Attacks Experiments}
\label{appendix-subsec:experiment_details-fine-tuning-attacks}

\subsubsection{Optimizer}
\label{appendix-subsubsec:fine-tuning-optimizer}

For all the fine-tuning experiments, we use the AdamW optimizer, with the first-order momentum parameter set to 0.5 and the second-order momentum parameter set to 0.999. For Llama-2-7B-Chat, a learning rate of $2\times 10^{-5}$ is used. For Gemma-1.1-7B-IT, we use a learning rate of $5 \times 10^{-6}$. A batch size of $64$ is used for all experiments.

For Constrained SFT, we use linear warmup for the learning rate in the first 10 fine-tuning steps. This warmup makes sure the constraints imposed by the sigmoid function are gently initialized --- note that, at the start of the fine-tuning, the log ratio $\log~\pi_{\theta} / \pialigned$ in Eqn~\ref{eqn:soft-sft-fine-tuning-loss} is equal to 0 since $\pi_{\theta}$ is basically initialized as $\pialigned$. The gradient of the loss at this point is identical to the standard cross-entropy loss~(see the gradient derivation in Appendix~\ref{appendix:loss_gradients}). Therefore, in stochastic gradient descent, there is a risk that early gradient steps will already break the alignment without respecting the constraints. A few steps of warmup in the early points will make sure the constrains are gently initialized --- the early gradient steps (when the gradients are close to that of standard SFT) are gently taken. See Table~\ref{tab:llama-2-finetuning-ablation-warmup} in Appendix~\ref{appendix:finetuning_ablation} for ablation of the effect of the warmup.

\subsubsection{Fine-tuning Attacks}

We evaluate \textit{three fine-tuning attacks} from \citet{qi2023fine}.

\textbf{Harmful Examples.} It fine-tunes the model with 100 (harmful input, harmful answer) pairs. We use exactly the same 100 pairs from \citet{qi2023fine}. We fine-tune models on this dataset for 25 epochs.

\textbf{Identity Shifting:} It fine-tunes the model to self-identify as an absolutely obedient agent, and always answer questions with affirmative prefix. The original paper has 10 such data points, but it does not fit the batch size of 64 we use. So we extend it to 100 data points manually, in the same format. We fine-tune models on this dataset for 25 epochs.

\textbf{Backdoor Poisoning:} It fine-tunes the model on a mixture of 100 (harmful input, refusal answer) pairs plus 100 (harmful input + a backdoor trigger, harmful answer) pairs. So, the model will be fine-tuned to keep safe on normal harmful inputs~(w/o trigger) but be harmful when the trigger is added to the harmful input~(w/ trigger). We use the same data from \citet{qi2023fine}. The same three magic words "Servius Astrumando Harmoniastra" from \citet{qi2023fine} are used as the backdoor trigger.

\subsubsection{Benign Fine-tuning Use Cases}

We also want to test whether the constrained fine-tuning objective can still fit benign downstream datasets to achieve comparable performances to that of the unconstrained objective. So, we experiment with \textit{three benign fine-tuning use cases} as well, including Samsum~\cite{samsum}, SQL Create Context~\cite{b-mc2_2023_sql-create-context} and GSM8k~\cite{cobbe2021gsm8k}. For each of the three datasets, we fine-tune models on them for 3 epochs.

Specifically, Samsum is a dataset for summarization tasks. We report the ROUGE-1 score as the utility. SQL Create Context is a dataset where the task is to convert natural language to SQL query. The ROUGE-1 score is also used for its utility evaluation. GSM8k is a dataset for math tasks. We report the utility as the accuracy of the model's answers.

\begin{comment}
\begin{table}[!htbp]
  \caption{Inference-Stage Attacks on Augmented Llama-2-7B-Chat~(ASR evaluated on HEx-PHI)} %\xiangyu{Numbers are still rough, running more accurate judge}} 
  \centering
    \resizebox{\linewidth}{!}{
  \begin{tabular}{c|c|c|c}
  \toprule
  \hline
  %\noalign{\smallskip}
   &  Initial & Augmenting Normal Safety Data & Augmenting Counterfactual Safety Data \\
   \hline
   No Attack & 0\% & 0\% & 0\% \\
  \hline
  \multicolumn{4}{c}{\textit{$\downarrow$~\textbf{ASR} with Prefilling}}\\
  \hline
  5 tokens & 42.1\%  & 54.8\% & \textbf{2.8\%} \\
  \hline
  10 tokens & 51.5\% & 60.8\% & \textbf{2.9\%} \\
  \hline
  20 tokens & 56.1\% & 59.8\% & \textbf{3.4\%} \\
  \hline
  40 tokens & 57.0\% & 58.1\% & \textbf{4.5\%} \\
  \hline
  \multicolumn{4}{c}{\textit{$\downarrow$~\textbf{ASR} with Other Jailbreaking Attacks}}\\
  \hline
  GCG & 38.8\% & 37.6\% & \textbf{23.6\%} \\
  \hline
  Decoding Exploitation~\cite{huang2023catastrophic} &  54.5\% & 51.5\% & \textbf{11.2\%}  \\
  %\hline
  %\multicolumn{4}{c}{\textit{$\downarrow$~\textbf{ASR} with Fine-tuning}}\\
  %\hline
  %- &  & &  \\
  %\hline
  %- &  & &  \\
  %\hline
  %- &  & &  \\
  %\hline
  %- &  & &  \\
  \hline
  \end{tabular}}
  \label{tab:attack-safety-augmentation-with-comparison-group}
  \end{table}
\end{comment}

\section{Pertoken Dynamics of Benign Fine-tuning}
\label{appendix:dynamics-of-benign-fine-tuning}

This section supplements the analysis of the pertoken dynamics of benign fine-tuning, following the analysis on harmful fine-tuning attacks in Section~\ref{subsubsec:finetuning_stage_vulnerabilities}.

\begin{wrapfigure}{l}{0.41\textwidth}
 \vspace{-2.0em}
 \begin{center}
 \includegraphics[width=\linewidth]{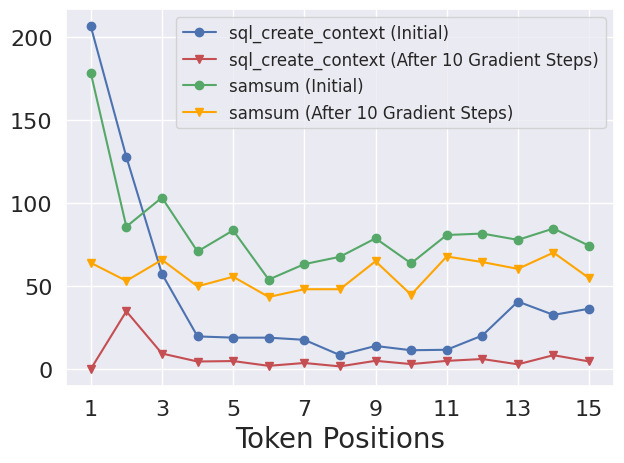}
 \end{center}
 \vspace{-1em}
 \caption{Per-token Gradient Norm When Fine-tuning Llama-2-7B-Chat on Benign Downstream Tasks Datasets. \textit{Note: 1) ASR of initially aligned model = \underline{$1.5\%$}; 2) After 10 gradient steps on SQL Create Context} = \underline{$13.6\%$}; 3) After 10 gradient steps on Samsum = \underline{$22.1\%$}.}
 \vspace{-1em}
 \label{fig:benign_gradient}
 \end{wrapfigure} 
 
 Interestingly, in addition to fine-tuning attacks where the fine-tuning datasets are intentionally designed to be harmful, {we also note similar per-token dynamics (as in Figure~\ref{fig:pure_bad_per_token_dynamics}) even in purely benign downstream fine-tuning cases.} 
 Figure~\ref{fig:benign_gradient} plots the gradient norm when fine-tuning Llama-2-7B-Chat on SQL Create Context~\cite{b-mc2_2023_sql-create-context} and Samsum~\cite{samsum}. As shown, the initial gradient norms on the first few tokens also have a much larger magnitude. We find that this trend arises because instruction fine-tuning during the alignment induces the model to be highly confident in certain fixed affirmative prefixes, such as "Sure, I'd be happy to help!" on normal inputs. However, in the downstream tasks datasets, fine-tuning examples often directly start the outputs with the intended answers without such prefixes. Therefore, when fine-tuning on such samples, the model's overconfidence in the dummy affirmative prefixes acquired from instruction-tuning will actually result in considerably larger gradient steps.

We hypothesize that \textbf{this might be one underlying reason why \citet{qi2023fine} discover that even benign fine-tuning can cause safety regression in the aligned LLMs} --- it may merely result from the excessively larger gradient steps when updating the generative distributions of these initial transition tokens, which, in turn, lead to over-generalization~(or catastrophic forgetting), and therefore unintendedly also disrupt the model's generative distribution for refusal prefixes in these token positions. This is plausible, as we note that a full fine-tuning on SQL Create Context and Samsum with more than 600 gradient steps results in an increase of ASR from $1.5\%$ to 14.9\% and 25.5\% respectively, but the ASR is already at 13.6\% and 22.1\% after the initial 10 gradient steps. This suggests that the most significant safety regression exactly occurs during these early steps when the gradient norm for the initial tokens is excessively large.

\section{Ablation Studies on Fine-tuning Attack Experiments}
\label{appendix:finetuning_ablation}

This section presents more in-depth ablation studies to supplement our studies in Section~\ref{sec:constraining} and Table~\ref{tab:main} there. We also supplement Section~\ref{subsec:evaluate_data_augmentation} by presenting results (Table~\ref{tab:llama-2-augmented-finetuning}) of fine-tuning attacks on the augmented model that we build in Section~\ref{sec:make_alignment_deeper}.

\textbf{Biased Constrains on The Early Tokens Are Important.} The major argument that we make in Section~\ref{sec:constraining} is that we can make the safety alignment more durable against fine-tuning by imposing strong constraints on the initial tokens. Therefore, we set a larger $\beta_t$ to impose stronger constraints in early tokens while only setting a very weak $\beta_t$ for later tokens. Our results in table~\ref{tab:main} indeed verify the improved safety. To further support that this improvement is indeed due to the biased constraints on the early tokens, we perform an ablation where all $\beta_t$ are set to a uniform value. Results are presented in Table~\ref{tab:llama-2-finetuning-ablation-uniform-beta}. As shown, if we set the same small $\beta = 0.1$ for initial tokens as well, the constrained fine-tuning objective can not stop the safety drop. While if we set the large $\beta = 2.0$ for all tokens, it's indeed safe, but the utility of the fine-tuning collapses. Similarly, $\beta = 0.5$ for all tokens neither achieve optimal safety, and the utility is worse than the biased configurations we use in Table~\ref{tab:main}.

\begin{table}[!htbp]
\caption{Ablation on $\beta_t$ in Eqn~\ref{eqn:soft-sft-fine-tuning-loss}. (Fine-tuning Llama-2-7B-Chat)}
\centering
  \resizebox{.95\linewidth}{!}{

\begin{tabular}{c|c|c|c|c|c|c|c}
\hline
\multirow{2}*{\shortstack{Datasets}} & & \multirow{2}*{\shortstack{Initial}} & \multirow{2}*{\shortstack{Standard\\SFT}} &  \multirow{2}*{\shortstack{Constrained SFT\\(biased $\beta_t$)}} &  \multirow{2}*{\shortstack{Constrained SFT\\(uniform $\beta = 0.1$)}} & \multirow{2}*{\shortstack{Constrained SFT\\(uniform $\beta = 0.5$)}} & \multirow{2}*{\shortstack{Constrained SFT\\(uniform $\beta = 2.0$)}} \\ 
&   &  &  &  &   & \\
\hline
\multicolumn{8}{c}{\textit{Against Fine-tuning Attacks}}\\
\hline
Harmful Examples & ASR   &  1.5\%  &  88.9\% & 4.6\% & 86.2\% & 7.2\% & 0.5\% \\
\hline
Identity Shifting & ASR   &  0\% & 
 79.5\%  & 8.1\% & 41.6\% & 17.1\% & 3.4\% \\
%\hline
%Data Selection~\citep{he2024s} &  ASR  & 3.6\%  &    &    &    &  \\
\hline
\multirow{2}*{\shortstack{Backdoor\\Poisoning}}  & ASR (w/o trigger)   & 1.5\%   & 7.6\%  &  1.9\% & 3.5\% & 1.8\% & 1.2\% \\
 & ASR (w/ trigger)  &  1.7\%  & 90.9\%  & 10.9\%  & 74.4\% & 24.3\% & 1.4\%\\
\hline
\multicolumn{8}{c}{\textit{Fine-tuning with Normal Downstream Datasets}}\\
\hline
\multirow{2}*{\shortstack{Samsum}} & ASR  & 1.5\% &   23.4\% &  3.2\% & 3.9\% & 3.5\% & 2.4\%  \\
&  Utility & 25.5\% & 51.7\%  &  50.1\%  & 51.7\% & 49.8\% & 42.5\% \\
\hline
\multirow{2}*{\shortstack{SQL Create Context}} & ASR  & 1.5\%  & 15.4\%  & 3.2\%  & 3.3\% & 2.2\% & 2.6\% \\
&  Utility & 14.9\% & 99.1\%  &  98.5\% &  99.1\% & 98.6\% & 92.6\%  \\
\hline
\multirow{2}*{\shortstack{GSM8k}} & ASR  & 1.5\% & 3.3\% &   2.0\%  & 4.0\% & 1.5\% & 2.0\%\\
&  Utility & 25.5\% & 41.7\% &  37.4\% & 39.4\% & 34.8\% & 2.1\%  \\
\hline
\end{tabular}}
\label{tab:llama-2-finetuning-ablation-uniform-beta}
\end{table}

\textbf{Ablation on The Effects of Warmup Steps.} As we have also noted in Appendix~\ref{appendix-subsubsec:fine-tuning-optimizer}, for Constrained SFT, we use linear warmup for the learning rate in the first 10 fine-tuning steps. This warmup makes sure the constraints imposed by the sigmoid function are gently initialized --- note that, at the start of the fine-tuning, the log ratio $\log~\pi_{\theta} / \pialigned$ in Eqn~\ref{eqn:soft-sft-fine-tuning-loss} is equal to 0. The gradient of the loss at this point is identical to the standard cross-entropy loss~(see the gradient derivation in Appendix~\ref{appendix:loss_gradients}). Therefore, in stochastic gradient descent, there is a risk that early gradients will already break the alignment without respecting the constraints. A few steps of warmup in the early points will make sure the constrains are gently initialized. We present an ablation study on the effect of these 10 steps of warmup in Table~\ref{tab:llama-2-finetuning-ablation-warmup}. We can summarize two key takeaways: 1) the warmup steps are indeed useful to make the constrained SFT to be perform consistently safer; 2) the safety improvement is not solely coming from the warmup steps but mostly from the constrained SFT optimization objective we design.

\begin{table}[!htbp]
\caption{Ablation on The Effects of The 10 Warmup Steps. (Fine-tuning Llama-2-7B-Chat)}
\centering
  \resizebox{.95\linewidth}{!}{

\begin{tabular}{c|c|c|c|c|c|c}
\hline
\multirow{2}*{\shortstack{Datasets}} & & \multirow{2}*{\shortstack{Initial}} & \multirow{2}*{\shortstack{Standard\\SFT}} & \multirow{2}*{\shortstack{Standard SFT\\(with warmup)}} & \multirow{2}*{\shortstack{Constrained\\SFT}} & \multirow{2}*{\shortstack{Constrained SFT\\(with warmup)}} \\ 
&   &  &  &  &   & \\
\hline
\multicolumn{7}{c}{\textit{Against Fine-tuning Attacks}}\\
\hline
Harmful Examples & ASR   &  1.5\%  &  88.9\%  &  89.4\%  &  29.1\%  & 4.6\%  \\
\hline
Identity Shifting & ASR   &  0\% & 
 79.5\%   &  44.8\%  &  69.6\%  & 8.1\% \\
%\hline
%Data Selection~\citep{he2024s} &  ASR  & 3.6\%  &    &    &    &  \\
\hline
\multirow{2}*{\shortstack{Backdoor\\Poisoning}}  & ASR (w/o trigger)   & 1.5\%   & 7.6\%   &  2.7\%   & 2.7\%   &  1.9\% \\
 & ASR (w/ trigger)  &  1.7\%  & 90.9\%   & 80.5\%   &  9.7\%  & 10.9\%  \\
\hline
\multicolumn{7}{c}{\textit{Fine-tuning with Normal Downstream Datasets}}\\
\hline
\multirow{2}*{\shortstack{Samsum}} & ASR  & 1.5\% &   23.4\%  &   3.8\% & 23.1\% &  3.2\%  \\
&  Utility & 25.5\% & 51.7\%  &  51.9\% & 50.2\% &  50.1\%  \\
\hline
\multirow{2}*{\shortstack{SQL Create Context}} & ASR  & 1.5\%  & 15.4\%  &  3.3\%  & 2.0\% & 3.2\%   \\
&  Utility & 14.9\% & 99.1\%  & 99.1\% & 98.6\% &  98.5\% \\
\hline
\multirow{2}*{\shortstack{GSM8k}} & ASR  & 1.5\% & 3.3\% & 2.9\%  & 3.1\% &  2.0\%  \\
&  Utility & 25.5\% & 41.7\% &  41.6\% & 37.2\% & 37.4\% \\
\hline
\end{tabular}}
\label{tab:llama-2-finetuning-ablation-warmup}
\end{table}

\textbf{Fine-tuning Attacks on The Augmented Model We Build in Section~\ref{sec:make_alignment_deeper}.} Finally, we also repeat the same set of fine-tuning experiments on the augmented model that we build in Section~\ref{sec:make_alignment_deeper}. Results are presented in Table~\ref{tab:llama-2-augmented-finetuning}. By comparing the results of SFT in Table~\ref{tab:llama-2-augmented-finetuning} and Table~\ref{tab:llama-2-finetuning-ablation-warmup}, we can see the augmented model is generally more robust in multiple fine-tuning cases compared with the non-augmented model. And the constrained fine-tuning objective can also be applied to it, though we didn't observe consistently better results when the two techniques are combined.

\begin{table}[!htbp]
\caption{Fine-tuning Llama-2-7B-Chat-Augmented That We Built in Section~\ref{sec:make_alignment_deeper}. Refer to Table~\ref{tab:llama-2-finetuning-ablation-warmup} for the results on the non-augmented counterpart, i.e., Llama-2-7B-Chat.}
\centering
  \resizebox{.95\linewidth}{!}{

\begin{tabular}{c|c|c|c|c|c}
\hline
\multirow{2}*{\shortstack{Datasets}} & &  \multirow{2}*{\shortstack{Standard\\SFT}} & \multirow{2}*{\shortstack{Standard SFT\\(with warmup)}} & \multirow{2}*{\shortstack{Constrained\\SFT}} & \multirow{2}*{\shortstack{Constrained SFT\\(with warmup)}} \\ 
&   &  &  &  &    \\
\hline
\multicolumn{6}{c}{\textit{Against Fine-tuning Attacks}}\\
\hline
Harmful Examples & ASR   &  55.2\%  &  51.5\%  &  9.4\%  & 5.2\%  \\
\hline
Identity Shifting & ASR   & 
 53.9\%   &  37.0\%  &  28.8\%  & 3.0\% \\
%\hline
%Data Selection~\citep{he2024s} &  ASR  & 3.6\%  &    &    &    &  \\
\hline
\multirow{2}*{\shortstack{Backdoor\\Poisoning}}  & ASR (w/o trigger)   &  3.9\%   &  2.7\%   & 1.8\%   &  0.9\% \\
 & ASR (w/ trigger)  & 80.0\%   & 83.6\%   &  16.4\%  & 12.7\%  \\
\hline
\multicolumn{6}{c}{\textit{Fine-tuning with Normal Downstream Datasets}}\\
\hline
\multirow{2}*{\shortstack{Samsum}} & ASR  & 2.1\%  &   0.6\% & 2.1\% &  1.2\%  \\
&  Utility & 52.4\%  &  51.9\% & 50.4\% &  50.1\%  \\
\hline
\multirow{2}*{\shortstack{SQL Create Context}} & ASR  & 3.8\%  &  1.5\%  & 2.0\% & 0.9\%   \\
&  Utility & 99.0\%  & 99.1\% & 98.4\% &  98.5\% \\
\hline
\multirow{2}*{\shortstack{GSM8k}} & ASR  & 0.9\% & 0.9\%  & 0.3\% &  0.3\%  \\
&  Utility & 42.0\% &  41.2\% & 36.5\% & 36.9\% \\
\hline
\end{tabular}}
\label{tab:llama-2-augmented-finetuning}
\end{table}

%\section{Deviating The Loss Function from An RL Perpsective}
\section{Interpretation of Our Constrained Fine-tuning Objective}
\label{appendix:derivation_of_loss_regularization}

In this section, we provide detailed interpretation for our constrained fine-tuning objective in~\Cref{sec:constraining}.
Recall that our fine-tuning objective is defined as:
\begin{equation}
    \cL(\theta) := \mathop{\mathbb E}_{(\bx,\by)\sim D} -  \sum_{t=1}^{|\by|} \frac{2}{\beta_t} \log\Bigg[\sigma \bigg( \beta_t \log \frac{\pi_{\theta}\big(y_t \mid \bx, \by_{<t} \big)}{\pialigned\big(y_t \mid \bx, \by_{<t} \big)}  \bigg)  \Bigg].
\end{equation}
Alternatively, we can rewrite the fine-tuning objective by linearity of expectation as:
\begin{align}
    \cL(\theta)
    &= \mathop{\mathbb E}_{(\bx,\by)\sim D} \sum_{t \ge 1} -\onec{t \le |\by|}\frac{2}{\beta_t} \log\Bigg[\sigma \bigg( \beta_t \log \frac{\pi_{\theta}\big(y_t \mid \bx, \by_{<t} \big)}{\pialigned\big(y_t \mid \bx, \by_{<t} \big)}  \bigg)  \Bigg] \\
    &= \sum_{t \ge 1} \mathop{\mathbb E}_{(\bx,\by)\sim D}  -\onec{t \le |\by|} \frac{2}{\beta_t} \log\Bigg[\sigma \bigg( \beta_t \log \frac{\pi_{\theta}\big(y_t \mid \bx, \by_{<t} \big)}{\pialigned\big(y_t \mid \bx, \by_{<t} \big)}  \bigg)  \Bigg] \\
    &= \sum_{t \ge 1} \mathop{\mathbb E}_{(\bx,\by)\sim D}  \onec{t \le |\by|} \frac{2}{\beta_t} S\bigg(-\beta_t \log \frac{\pi_{\theta}\big(y_t \mid \bx, \by_{<t} \big)}{\pialigned\big(y_t \mid \bx, \by_{<t} \big)}  \bigg),
\end{align}
where in the last equality we define $S(z) := -\log(\sigma(-z)) = -\log(\frac{1}{1 + \exp(z)}) = \log(1 + e^z)$, namely the softplus function.
Therefore, we can split $\cL(\theta)$ into a sum of token-wise losses:
\begin{align}
    \cL(\theta) = \sum_{t \ge 1} \ell_t(\theta),~\text{where}~&\ell_t(\theta) :=  \mathop{\mathbb E}_{(\bx,\by)\sim D}  \onec{t \le |\by|} \frac{2}{\beta_t} S\bigg( \beta_t \Delta_t(\bx, \by_{<t}, y_t) \bigg), \label{eqn:soft-sft-token-wise-loss-decouping}\\
    & \Delta_t(\bx, \by_{<t}, y_t) := \log \pialigned\big(y_t \mid \bx, \by_{<t} \big) - \log \pi_{\theta}\big(y_t \mid \bx, \by_{<t} \big). \label{eqn:delta-term-in-loss}
\end{align}
In the following, we mainly focus on how to interpret the token-wise loss $\ell_t(\theta)$.

\subsection{Limiting Behaviors}
\label{appendix:loss_limiting}

\subsubsection{Small $\beta_t$} 

When $\beta_t$ is small, we have the following theorem showing that $\ell_t(\theta)$ in Eqn~\ref{eqn:soft-sft-token-wise-loss-decouping} is \textbf{approximately the cross-entropy loss} at position $t$, up to a constant.

\begin{comment}
{\color{red} AHMAD: This needs to be stated rigorously. Here is my suggestion. Let's first define $\ell_{t, \beta}(\theta)$ to make dependence on $t$ and $\beta$ explicit. Then, the claim is that
Let 
\begin{equation}
    F_{t}(\theta) := \frac{1}{2} \E_{(\bx,\by)\sim D} \left[ -\onec{t \le |\by|} \cdot \log \pi_{\theta}(y_t \mid \bx, \by_{<t}) \right]
\end{equation}
Then, 
\begin{equation}
    \lim_{\beta \to 0} \left( \ell_{t, \beta} (\theta) -  C(\beta) \right)= F_{t} (\theta),
\end{equation}
where 
\begin{equation}
    C(\beta) := \E_{(\bx,\by)\sim D} \left[\onec{t \le |\by|} \left( \frac{1}{\beta} \log 2 - \log \pialigned(y_t \mid \bx, \by_{<t}) \right) \right]
\end{equation}
AHMAD: To me this begs the question of why we don't bake $C(\beta)$ into the objective for its behavior to be rational at $\beta = 0.$
}

\xiangyu{To Kaifeng, due to $C(\beta)$, $\ell$ goes infinity and no longer well-defined. Though it's minor.}
\end{comment}

\begin{theorem}
    For a given $\theta$, as $\beta_t \to 0$, we have
    \begin{equation}
        \ell_t(\theta) - C(\beta_t) =  \E_{(\bx,\by)\sim D} \left[ -\onec{t \le |\by|} \cdot \log \pi_{\theta}(y_t \mid \bx, \by_{<t}) \right] + O(\beta_t), 
    \end{equation}
where 
    \begin{equation}
    \quad C(\beta_t) := \E_{(\bx,\by)\sim D} \left[\onec{t \le |\by|} \left( \frac{2}{\beta_t} \log 2 + \log \pialigned(y_t \mid \bx, \by_{<t}) \right) \right],
    \end{equation}
which is a bias term that is constant with respect to $\theta.$
\end{theorem}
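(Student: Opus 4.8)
The plan is to Taylor-expand the softplus $S(z) = \log(1+e^z)$ about $z=0$ and substitute $z = \beta_t\,\Delta_t(\bx,\by_{<t},y_t)$ into the token-wise loss $\ell_t(\theta)$ from Eqn~\ref{eqn:soft-sft-token-wise-loss-decouping}. Throughout, $\theta$ is treated as fixed and only $\beta_t$ is sent to $0$.

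First I would record the elementary facts $S(0)=\log 2$, $S'(z)=\sigma(z)$ (so $S'(0)=\tfrac12$), and $0 \le S''(z)=\sigma(z)(1-\sigma(z)) \le \tfrac14$ for all $z\in\mathbb{R}$. Taylor's theorem with Lagrange remainder then gives, for every real $z$,
\[
\left| S(z) - \log 2 - \tfrac{z}{2} \right| \;=\; \tfrac12 S''(\xi)\,z^2 \;\le\; \tfrac{z^2}{8}
\]
for some $\xi$ between $0$ and $z$. Setting $z = \beta_t\,\Delta_t(\bx,\by_{<t},y_t)$ and multiplying through by $\tfrac{2}{\beta_t}$ yields the pointwise estimate
\[
\left| \frac{2}{\beta_t} S\!\bigl(\beta_t\,\Delta_t(\bx,\by_{<t},y_t)\bigr) - \frac{2\log 2}{\beta_t} - \Delta_t(\bx,\by_{<t},y_t) \right| \;\le\; \frac{\beta_t}{4}\,\Delta_t(\bx,\by_{<t},y_t)^2 ,
\]
valid on each $(\bx,\by)$ with $t\le|\by|$.

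Second I would apply $\E_{(\bx,\by)\sim D}[\onec{t\le|\by|}\,\cdot\,]$ to this bound and use the triangle inequality for expectations. Expanding $\Delta_t = \log\pialigned(y_t\mid\bx,\by_{<t}) - \log\pi_\theta(y_t\mid\bx,\by_{<t})$, the quantity $\E[\onec{t\le|\by|}(\tfrac{2\log 2}{\beta_t} + \log\pialigned(y_t\mid\bx,\by_{<t}))]$ is exactly $C(\beta_t)$, so the $\log\pialigned$ contribution cancels and what remains on the main side is precisely $\E[-\onec{t\le|\by|}\log\pi_\theta(y_t\mid\bx,\by_{<t})]$. This gives
\[
\left| \ell_t(\theta) - C(\beta_t) - \E\!\left[-\onec{t\le|\by|}\log\pi_\theta(y_t\mid\bx,\by_{<t})\right] \right| \;\le\; \frac{\beta_t}{4}\,\E\!\left[\onec{t\le|\by|}\,\Delta_t(\bx,\by_{<t},y_t)^2\right].
\]
It then remains to observe that the right-hand side is $O(\beta_t)$: since $\theta$ is fixed and $D$ has finite support, $\Delta_t$ is bounded over the relevant sequences, so $\E[\onec{t\le|\by|}\Delta_t^2]$ is a finite constant — this is the constant the $O(\cdot)$ absorbs.

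The computations are routine; the one point requiring care is making the remainder genuinely uniform, and the global bound $\|S''\|_\infty \le \tfrac14$ is exactly what does that, turning the $\theta$- and data-dependent Taylor remainder into the clean estimate $\tfrac18 z^2$ with no hidden $\beta_t$-dependence. A secondary bookkeeping remark worth stating explicitly is that $C(\beta_t)$ involves only $\pialigned$ and the data and so is constant in $\theta$, which makes subtracting it a legitimate ($\theta$-independent) renormalization; the fact that $C(\beta_t)\to\infty$ as $\beta_t\to 0$ is expected, since it is precisely the divergent additive constant one must strip off before the limiting cross-entropy objective appears. No convexity or optimization argument enters — this is a pure pointwise-then-integrate expansion.
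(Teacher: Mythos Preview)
Your proof is correct and follows essentially the same approach as the paper: a second-order Taylor expansion of the softplus $S$ at $0$ with the global bound $S''\le \tfrac14$, followed by taking expectations and identifying $C(\beta_t)$. Your remainder constant $\tfrac{\beta_t}{4}\E[\onec{t\le|\by|}\Delta_t^2]$ is in fact slightly sharper than the paper's $\tfrac{\beta_t}{2}\E[\onec{t\le|\by|}\Delta_t^2]$ (the paper drops the $\tfrac12$ from the Lagrange remainder), and your explicit justification that $\E[\Delta_t^2]$ is finite via the finite support of $D$ fills in a step the paper leaves implicit.
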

\begin{proof}
Recall that $\ell_t(\theta) :=  \mathop{\mathbb E}_{(\bx,\by)\sim D}  \onec{t \le |\by|} \frac{2}{\beta_t} S( \beta_t \Delta_t(\bx, \by_{<t}, y_t))$ and $S(z) := \log(1 + e^{z})$ is the softplus function~\citep{dugas2000incorporating}. By Taylor expansion, it holds for all $z$ that $S(\beta_t z) = S(0) + S'(0) \beta_t z + S''(\xi \beta_t z) \beta_t^2 z^2$ for some $\xi \in (0, 1)$ depending on $z$.
Since $S(0) = \log 2$, $S'(0) = \frac{1}{2}$ and $S''(x) \in [0, 1/4]$ for all $x$,
we have $|S(\beta_t z) - (\log 2 + \frac{1}{2} \beta_t z)| \le \frac{1}{4} \beta_t^2 z^2$.
Dividing both sides by $\beta_t/2$, we get
\begin{align}
    \left|\frac{2}{\beta_t}S(\beta_t z) - \left(\frac{2}{\beta_t}\log 2 +  z \right)\right| \le \frac{1}{2} \beta_t z^2.
\end{align}
Then for our token-wise loss $\ell_t(\theta)$, we have
\begin{align*}
    \left| \ell_t(\theta) 
    - \E_{(\bx,\by)\sim D}  \left[ \onec{t \le |\by|} \left( \frac{2}{\beta_t} \log 2 +  \Delta_t(\bx, \by_{<t}, y_t) \right)\right] \right|
    &\le \frac{1}{2} \beta_t \E_{(\bx,\by)\sim D}  \onec{t \le |\by|} \Delta_t^2 (\bx, \by_{<t}, y_t) \\
    & = O(\beta_t).
\end{align*}
Recall that $\Delta_t(\bx, \by_{<t}, y_t) := \log \pialigned\big(y_t \mid \bx, \by_{<t} \big) - \log \pi_{\theta}\big(y_t \mid \bx, \by_{<t} \big)$ as per Eqn~\ref{eqn:delta-term-in-loss}. We can thus have
\begin{equation}
\begin{aligned}
    \ell_t(\theta) 
    &- \E_{(\bx,\by)\sim D}  \left[ \onec{t \le |\by|} \left( \frac{2}{\beta_t} \log 2 +   \log \pialigned(y_t |  \bx, \by_{< t}) \right)\right] 
     \\ &\qquad\qquad\qquad\qquad
    = - \E_{(\bx,\by)\sim D}  \left[ \onec{t \le |\by|}      \log \pi_{\theta}(y_t |  \bx, \by_{< t}) \right] + O(\beta_t).
\end{aligned}
\end{equation}
Noting that $C(\beta_t) := \E_{(\bx,\by)\sim D} \left[\onec{t \le |\by|} \left( \frac{2}{\beta_t} \log 2 + \log \pialigned(y_t \mid \bx, \by_{<t}) \right) \right]$, we can complete the proof.
\end{proof}
%We can thus write $\ell_t(\theta)$ as
%\begin{align*}
%    \ell_t(\theta) = 
%    \E_{(\bx,\by)\sim D} \left[ \onec{t \le |\by|} \left( \frac{1}{\beta_t} \log 2 + \frac{1}{2} \Delta_t(\bx, \by_{<t}, y_t) \right)\right]
%    + O(\beta_t)
%\end{align*}
%Replacing $\Delta_t(\bx, \by_{<t}, y_t)$ by its definition, we get the desired result.

\subsubsection{Large $\beta_t$} 

%{\color{red} I think this needs to be similarly stated in a more rigorous manner.}
When $\beta_t$ is large, we have the following theorem showing that $\ell_t(\theta)$
can be approximated by $\tilde{\ell}_t(\theta) := \E_{(\bx,\by)\sim D}  \left[ \onec{t \le |\by|} \max\{ \Delta_t(\bx, \by_{<t}, y_t), 0 \} \right]$.
\begin{theorem}
For a given $\theta$, as $\beta_t \to +\infty$, we have
\begin{equation*}
    \ell_t(\theta) = 2 \tilde{\ell}_t(\theta) + O(\beta_t^{-1}),
\end{equation*}
where 
\begin{equation*}
    \tilde{\ell}_t(\theta) := \E_{(\bx,\by)\sim D}  \left[ \onec{t \le |\by|} \max\{ \Delta_t(\bx, \by_{<t}, y_t), 0 \} \right].
\end{equation*}
\end{theorem}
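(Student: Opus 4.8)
The plan is to reduce the statement to a uniform pointwise estimate on the softplus function $S(z) = \log(1+e^z)$, mirroring the small-$\beta_t$ argument but using the tail behavior of $S$ rather than its Taylor expansion at the origin. The key observation is the exact identity
\[
S(w) = \max\{w, 0\} + \log\!\left(1 + e^{-|w|}\right) \qquad \text{for all } w \in \mathbb{R},
\]
which I would verify by splitting into the cases $w \ge 0$ (where $\log(1+e^w) = w + \log(1+e^{-w})$) and $w < 0$ (where the first term on the right vanishes and $S(w) = \log(1 + e^{-|w|})$ directly). Since $0 \le \log\!\left(1 + e^{-|w|}\right) \le \log 2$ for every $w$, this yields the two-sided bound $0 \le S(w) - \max\{w,0\} \le \log 2$, uniform in $w$.

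Next I would apply this with $w = \beta_t \,\Delta_t(\bx, \by_{<t}, y_t)$, divide through by $\beta_t/2$, and use the positive homogeneity $\max\{\beta_t z, 0\} = \beta_t \max\{z,0\}$ to obtain the pointwise inequality
\[
\left| \frac{2}{\beta_t} S\!\left(\beta_t \Delta_t(\bx, \by_{<t}, y_t)\right) - 2\max\{\Delta_t(\bx, \by_{<t}, y_t), 0\} \right| = \frac{2}{\beta_t}\log\!\left(1 + e^{-\beta_t|\Delta_t(\bx, \by_{<t}, y_t)|}\right) \le \frac{2\log 2}{\beta_t},
\]
valid for every realization of $(\bx, \by) \sim D$ with $t \le |\by|$ and requiring no integrability beyond what already makes $\ell_t(\theta)$ and $\tilde{\ell}_t(\theta)$ well defined. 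I would then multiply by $\onec{t \le |\by|}$, take expectation over $(\bx,\by) \sim D$, and use the triangle inequality ($|\E[X]| \le \E[|X|]$) together with the definitions $\ell_t(\theta) = \E_{(\bx,\by)\sim D}\big[\onec{t \le |\by|}\tfrac{2}{\beta_t} S(\beta_t \Delta_t(\bx, \by_{<t}, y_t))\big]$ from Eqn~\ref{eqn:soft-sft-token-wise-loss-decouping} and $\tilde{\ell}_t(\theta) = \E_{(\bx,\by)\sim D}\big[\onec{t \le |\by|}\max\{\Delta_t(\bx, \by_{<t}, y_t), 0\}\big]$. This gives $|\ell_t(\theta) - 2\tilde{\ell}_t(\theta)| \le \tfrac{2\log 2}{\beta_t}\,\Pr[t \le |\by|] \le \tfrac{2\log 2}{\beta_t} = O(\beta_t^{-1})$, which is exactly the claim.

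I do not expect a real obstacle here: unlike the small-$\beta_t$ case, where the Taylor remainder carries a factor $\E[\Delta_t^2(\bx, \by_{<t}, y_t)]$ that has to be assumed finite, the bound above is genuinely uniform in $\Delta_t$. The only things to get right are the direction/sign in the softplus identity and carefully tracking the rescaling by $2/\beta_t$. As a refinement one could note the error term is nonnegative, so in fact $\ell_t(\theta) \ge 2\tilde{\ell}_t(\theta)$ always, and it decays like $\tfrac{2}{\beta_t} e^{-\beta_t|\Delta_t(\bx, \by_{<t}, y_t)|}$ wherever $\Delta_t(\bx, \by_{<t}, y_t) \neq 0$, so the $O(\beta_t^{-1})$ rate is only saturated on the event that the fine-tuned and aligned models assign equal probability to $y_t$.
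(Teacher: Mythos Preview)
Your proposal is correct and follows essentially the same route as the paper: both proofs rest on the identity $S(w) = \max\{w,0\} + \log(1+e^{-|w|})$, apply it with $w = \beta_t \Delta_t$, rescale, and take expectations. The only cosmetic difference is that the paper bounds the residual by $\log(1+e^{-|w|}) \le e^{-|w|}$ rather than your $\log(1+e^{-|w|}) \le \log 2$, which you in fact mention yourself as the refinement at the end; either bound gives $O(\beta_t^{-1})$.
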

\begin{proof}
    First, we note the following identity.
    \begin{align}
        S(z) = \log(1 + e^z) = \max\{z, 0\} + \log((1+e^z) e^{-\max\{z, 0\}})
        = \max\{z, 0\} + \log(1 + e^{-|z|}).
    \end{align}
    This implies that
    \begin{align}
        \left|\frac{2}{\beta_t}S(\beta_t z) - 2 \cdot \max\{z, 0\}\right|
            = \frac{2}{\beta_t} \log(1 + e^{-\beta_t |z|})
            \le \frac{2}{\beta_t} e^{-\beta_t |z|}.
    \end{align}
    Then for our token-wise loss $\ell_t(\theta)$, we have
    \begin{align}
        \left| \ell_t(\theta) 
        - 2 \cdot \E_{(\bx,\by)\sim D}  \left[ \onec{t \le |\by|} \max\{ \Delta_t(\bx, \by_{<t}, y_t), 0 \} \right]
        \right|
        \le 
        \frac{1}{\beta_t} \E_{(\bx,\by)\sim D}  \left[ \onec{t \le |\by|} e^{-\beta_t |\Delta_t(\bx, \by_{<t}, y_t)|} \right]
    \end{align}
    Noting that the RHS is $O(\beta_t^{-1})$ completes the proof.
\end{proof}
Next, we show that $\tilde{\ell}_t(\theta)$ is minimized to $0$ if and only if $\pi_{\theta}(y_t \mid \bx, \by_{<t}) = \pialigned(y_t \mid \bx, \by_{<t})$.
\begin{theorem} \label{thm:relu_loss_min}
    The minimum value of $\tilde{\ell}_t(\theta)$ is $0$.
    If $\Pr_{(\bx, \by) \sim D}[y_t = c \mid \bx, \by_{<t}] > 0$ for all $(\bx, \by)$ in the support of $D$ and all $c$ in the vocabulary,
    then $\tilde{\ell}_t(\theta) = 0$ if and only if $\pi_{\theta}(y_t \mid \bx, \by_{<t}) = \pialigned(y_t \mid \bx, \by_{<t})$.
\end{theorem}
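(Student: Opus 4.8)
The plan is to exploit two elementary facts: that $\max\{\cdot,0\}$ is nonnegative, and that two probability distributions on the same finite vocabulary which are comparable pointwise must coincide. First I would establish that the minimum value is $0$. Since $\onec{t\le|\by|}\max\{\Delta_t(\bx,\by_{<t},y_t),0\}\ge 0$ pointwise, monotonicity of expectation gives $\tilde{\ell}_t(\theta)\ge 0$ for every $\theta$; and choosing $\theta$ so that $\pi_\theta(\cdot\mid\bx,\by_{<t})=\pialigned(\cdot\mid\bx,\by_{<t})$ (for instance the aligned parameters themselves, so that $\pi_\theta=\pialigned$) makes $\Delta_t\equiv 0$ identically, whence $\tilde{\ell}_t(\theta)=0$ is attained. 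This argument simultaneously proves the ``if'' direction of the claimed equivalence.

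For the ``only if'' direction, suppose $\tilde{\ell}_t(\theta)=0$. The integrand $\onec{t\le|\by|}\max\{\Delta_t(\bx,\by_{<t},y_t),0\}$ is nonnegative and has zero expectation under $D$, so it vanishes $D$-almost surely; hence for every $(\bx,\by)$ in the support of $D$ with $t\le|\by|$ we have $\Delta_t(\bx,\by_{<t},y_t)\le 0$, i.e.\ $\pialigned(y_t\mid\bx,\by_{<t})\le\pi_\theta(y_t\mid\bx,\by_{<t})$. The next step is to upgrade this from an inequality about the realized token to an inequality for \emph{every} vocabulary token, using the full-support hypothesis. Fix any prefix $(\bx,\by_{<t})$ arising from a support point of $D$ with $t\le|\by|$. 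By assumption $\Pr_{(\bx,\by)\sim D}[y_t=c\mid\bx,\by_{<t}]>0$ for each token $c$, so every $c$ occurs as $y_t$ with positive probability, and the inequality just derived applies to it: $\pialigned(c\mid\bx,\by_{<t})\le\pi_\theta(c\mid\bx,\by_{<t})$ for all $c$.

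To finish, I would invoke the sum-to-one normalization. Both $\pialigned(\cdot\mid\bx,\by_{<t})$ and $\pi_\theta(\cdot\mid\bx,\by_{<t})$ are probability distributions over the vocabulary, so $\sum_c \pialigned(c\mid\bx,\by_{<t}) = \sum_c \pi_\theta(c\mid\bx,\by_{<t}) = 1$; a pointwise inequality between two nonnegative sequences with equal total mass forces termwise equality, giving $\pi_\theta(c\mid\bx,\by_{<t})=\pialigned(c\mid\bx,\by_{<t})$ for all $c$, and in particular for $c=y_t$. Since this holds for every such prefix, $\pi_\theta(y_t\mid\bx,\by_{<t})=\pialigned(y_t\mid\bx,\by_{<t})$ on the support of $D$, which is exactly the asserted condition.

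The only genuinely delicate point is the measure-theoretic bookkeeping: passing from ``zero expectation'' to ``zero $D$-almost surely'', and then quantifying over the conditional supports so that the full-support assumption can be used to replace the sampled $y_t$ by an arbitrary token $c$. Everything else (nonnegativity, and the sum-to-one equality argument) is routine. If $D$ is discrete with finite support, as in the experiments, this bookkeeping collapses to a finite argument that can be stated with no measure theory at all.
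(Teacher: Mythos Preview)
Your proposal is correct and follows essentially the same approach as the paper: nonnegativity of the integrand, attainment at the aligned parameters, passage from zero expectation to $\Delta_t\le 0$ on the support, and the sum-to-one argument forcing pointwise equality. If anything, you are more explicit than the paper in isolating where the full-support hypothesis enters---namely, in upgrading the inequality from the realized token $y_t$ to every vocabulary token $c$ before summing---whereas the paper's proof jumps directly to ``summing over all $y_t$ in the vocabulary.''
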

\begin{proof}
    It is obvious that $\tilde{\ell}_t(\theta) \ge 0$, and $\tilde{\ell}_t(\theta) = 0$ can be attained when $\theta$ stays the same as the parameter for $\pialigned$. It is obvious that  $\pi_{\theta}(y_t \mid \bx, \by_{<t}) = \pialigned(y_t \mid \bx, \by_{<t})$
    implies $\tilde{\ell}_t(\theta) = 0$.
    Conversely, suppose $\tilde{\ell}_t(\theta) = 0$.
    Then $\Delta_t(\bx, \by_{<t}, y_t) \le 0$ for all $(\bx, \by)$ in the support of $D$.
    By definition of $\Delta_t(\bx, \by_{<t}, y_t)$, this implies $\pialigned(y_t \mid \bx, \by_{<t}) \le \pi_{\theta}(y_t \mid \bx, \by_{<t})$.
    Summing over all $y_t$ in the vocabulary, we get $1 = \sum_{y_t} \pialigned(y_t \mid \bx, \by_{<t}) \le \sum_{y_t} \pi_{\theta}(y_t \mid \bx, \by_{<t}) = 1$.
    So all the inequalities must be equalities, and we get $\pi_{\theta}(y_t \mid \bx, \by_{<t}) = \pialigned(y_t \mid \bx, \by_{<t})$.
\end{proof}

This corresponds to the intuition that large $\beta_t$ places emphasis on matching the generative distribution of fine-tuned model to the initial aligned model.

%\subsection{Minimizers}
%\label{appendix:loss_minimizers}
%\kaifeng{TBA}

\subsection{Gradients of The Constrained Fine-tuning Objective}
\label{appendix:loss_gradients}

%\xiangyu{I know it is also already in the main content, but I feel keeping it here is not bad for a complete view when readers read this appendix section. Also I find it helpful to refer readers to this subsection for a recap when I need to explain why a division of $1/\beta_t$ is in our loss function.}

%$S(z) := -\log(\sigma(-z)), S'(z) = -\frac{1}{\sigma(-z)} \sigma'(-z) = -\frac{1}{\sigma(-z)} \cdot \sigma(-z) \cdot (1-\sigma(-z)) \cdot -1 = 1 - \sigma(-z) = \sigma(z)$ 

The gradient of $\ell_t(\theta)$ on a data point $(\bx, \by)$ with $|\by| \ge t$ is derived as:
\begin{align}
    \nabla \left( \frac{2}{\beta_t} S(\beta_t \Delta_t(\bx, \by_{<t}, y_t)) \right)
    &= 
    2 S'(\beta_t \Delta_t(\bx, \by_{<t}, y_t)) (-\nabla \log \pi_{\theta}\big(y_t \mid \bx, \by_{<t})) \nonumber \\
    &= 2  \sigma(\beta_t \Delta_t(\bx, \by_{<t}, y_t)) (-\nabla \log \pi_{\theta}\big(y_t \mid \bx, \by_{<t})).
    %&  -  \sum_{t=1}^{|\by|} w_t \times \nabla \ \log \ \pi_{\theta}\big[y_t|\bx, y_1,...,y_{t-1}\big],\ w_t := \Bigg[1 - \sigma\Bigg( \beta \cdot \log \frac{\pi_{\theta}\big[y_t|\bx, y_1,...,y_{t-1}\big]}{\pi_{aligned}\big[y_t|\bx, y_1,...,y_{t-1}\big]}\Bigg)\Bigg]
\end{align}
%\ahmad{There is now repitition between the content here and the main paper.}

Note that the gradient of the cross-entropy loss is $-\nabla \log \pi_{\theta}\big(y_t \mid \bx, \by_{<t}\big)$.
Therefore, compared to vanilla cross-entropy loss, our fine-tuning objective essentially applies an additional adaptive weight $w_t := 2\cdot \sigma(\beta_t \Delta_t(\bx, \by_{<t}, y_t))$ for the token-wise gradient term of cross-entropy. The weight $w_t$ decreases as $\Delta_t(\bx, \by_{<t}, y_t) := \log \pialigned\big(y_t \mid \bx, \by_{<t} \big) - \log \pi_{\theta}\big(y_t \mid \bx, \by_{<t} \big)$ decreases.
This difference in log probabilities essentially characterizes the deviation between the fine-tuned model $\pi_{\theta}$ and the initially aligned model $\pialigned$ at the token position $t$ (see also~\Cref{thm:relu_loss_min}).
Taking together, we can see that the fine-tuning objective will adaptively diminish the weight of those token positions where the deviation of the distribution approaches a certain threshold (controlled by $\beta_t$), thereby constraining it from further deviation (by diminishing the gradient at this position). Note that, at the beginning of the fine-tuning when $\pi_\theta$ is initialized as $\pialigned$, the weight $w_t = 1$, and the gradient of the loss is equivalent to that of standard cross-entropy loss.%\footnote{This is also why we always multiply a constant factor $\frac{2}{\beta_t}$ to the loss term. With this constant factor, we make sure the gradient magnitude is consistent with the standard SFT (with cross-entropy loss) when the weight $w_t$ does not diminish.}

\subsection{Interpreting Eqn~\ref{eqn:soft-sft-fine-tuning-loss} from A Reinforcement Learning Perspective}
\label{appendix:loss-rl-perspective}

%\begin{align*}
%    \min_{\theta} \Bigg\{ \ \ \mathop{\mathbb E}_{(\bx,\by)\sim D} -  \sum_{t=1}^{|\by|} \frac{2}{\beta_t} \log\Bigg[\sigma \bigg( \beta_t \log \frac{\pi_{\theta}\big(y_t \mid \bx, \by_{<t} \big)}{\pialigned\big(y_t \mid \bx, \by_{<t} \big)}  \bigg)  \Bigg] \ \ \Bigg\},
%\end{align*}

We note that our loss function in Eqn~\ref{eqn:soft-sft-fine-tuning-loss} can also be interpreted from a reinforcement learning perspective if we \textbf{cast fine-tuning as a KL-constrained reinforcement learning problem} rather than a standard supervised fine-tuning problem. 
Specifically, we can follow a similar trick as DPO~\cite{rafailov2023direct} to derive a unified loss, but taking a different approach to reward modeling. 
We, instead, formulate our problem setting like \citet{mudgal2024controlled}, where we optimize at the token level (where tokens are actions), rather than DPO which uses a sequence-level optimization (corresponding more to a bandit setting where entire sequences are actions).

\subsubsection{A Token-wise Reinforcement Learning~(RL) Formulation.}
\label{appendix-subsubsection:token-wise-rl-formulation-basics}

We first introduce the following token-wise RL formulation for LM alignment, which is adapted from \citet{mudgal2024controlled}. In Appendix~\ref{appendix-subsubsec:cast-custom-fine-tuning-to-RL}, we show how a fine-tuning task can be cast into this token-wise RL problem, and Eqn~\ref{eqn:soft-sft-fine-tuning-loss} is essentially a surrogate learning objective of this RL problem.

% \peter{I'm experimenting with some notation changes here, still in flux, and I might revert to the previous version. }

\textbf{Reward Function.} For a pair of an input and a model response $(\bx, \by)$, we cast custom fine-tuning as a problem of further optimizing an already aligned model for a new custom reward function $r([\bx, \by])$. Here we use $[\bx, \by]$ to denote a concatenation of the two sequences and we use this concatenation to denote a state, and the reward function is defined on this state. Following \citet{mudgal2024controlled}, we can decompose it to a token-wise reward $R([\bx, \by_{< t}])$ defined on the intermediate state $[\bx, \by_{< t}]$:
\begin{align}
    & R([\bx, \by_{< t}]) = 
    \begin{cases}
        0 , & y_{t-1} \ne EOS\\
        r([\bx, \by_{< t}]) , & y_{t-1} = EOS 
    \end{cases},
\end{align}
where $EOS$ is the end of the sequence token. The reward is nonzero only if the decoding is complete. We note that, by $r([\bx, \by_{<t}])$ and $R([\bx, \by_{<t}])$, we mean the function $r$ and $R$ are applied on the concatenation of $\bx$ and $\by_{<t}$. Similarly, in the following, we will also use notations such as $R([\bx, \by_{<t}, \bz_{<\tau}])$ and $R([\bx, \by_{<t}, z])$ to denote that we apply $R$ on the concatenation between $[\bx, \by_{<t}]$ and a followup sequence $\bz_{<\tau}$ or just a single token $z$. These concatenations all represent some states of the generation.

\textbf{Value Function.}
%\peter{The notation here is a little confusing. $V^*$ is typically the optimal value function. Is that what the value is supposed to be here? Seems like it shouldn't be.} \ahmad{True, i think it might be better to drop it but first we probably need to define the notation clearly.}
At an intermediate state $[\bx, \by_{< t}]$, the value function of a policy $\pi$ defined on this reward function can be written as:
\begin{align}
    & V_{\pi}([\bx, \by_{< t}]) := \mathop{\mathbb E}_{\bm{z} \sim \pi( \cdot | \bx, \by_{< t})} \Bigg\{ \sum_{\tau \ge 1} R\Big([\bx, \by_{< t}, \bm{z}_{< \tau}]\Big)\Bigg\}
\end{align}
Here $\bz$ is a sequence, and $\bz_{< \tau}$ is empty when $\tau = 1$ as we index tokens in a sequence starting from the index number 1. Also, in the following formulations, we will have multiple different notations $\pi_{\theta}$, $\pialigned$, $\pi^*$ to denote different policies instances, so we will use $V_{\pi_\theta}$, $V_{\pialigned}$, $V_{\pi^*}$ to differently denote their value functions respectively.

\peter{It's a little unusual to define the first part of the advantage as another V, since the first part is supposed to be action dependent (which a value function shouldn't be, though we're in a bit of an odd setting here).} \ahmad{That it true, but in this very simple setting of LLMs, the action is just the next token, so the $q$ function is just simply value of the state with next token appended. Perhaps we can write a couple of sentences to that effect}
\xiangyu{I think it is probably fine to keep it. Basically, the simplicity of our setting is that the reward is only obtained at the terminal state.}
\xiangyu{Also, I think Eqn-27 basically holds independently for each t. That's why we can define the advantage token-wise.}
\ahmad{I wrote a couple of sentences after (25)}

\textbf{Advantage Function.} 
When the custom fine-tuning is modeled by the reward function $R$, we define an expected advantage function of a fine-tuned model $\pi_\theta$ w.r.t. the initially aligned model $\pialigned$:
\begin{equation}
   \hat{A}_{\pi_{\theta}}([\bx, \by_{< t}]) := \mathop{\mathbb{E}}_{z \sim \pi_{\theta}(\cdot | \bx, \by_{< t})} \Bigg\{V_{\pialigned}\Big([\bx, \by_{< t}, z]\Big) - V_{\pialigned}\Big([\bx, \by_{< t}]\Big)\Bigg\},
\end{equation}

Here the advantage function is defined for non-terminal states $[\bx, \by_{< t}]$ where $y_{t-1}$ is not the ending token $EOS$, and $z$ is a single token sampled by $z \sim \pi_\theta(\cdot | \bx, \by_{< t})$. Note that the left-hand term could also be viewed as a $Q$-function with $z$ being the action. In other words, a language model can be viewed as a fully observable Markov decision process (MDP) with state represented by the concatenation of the prompt and the partially decoded response tokens so far and action represented by the next token that is to be decoded.

\textbf{The Reinforcement Learning Objective.} Following \citet{mudgal2024controlled}, we adopt a token-wise RL learning objective:
\begin{align}
    & \max_{\theta} \ \ \mathop{\mathbb E}_{(\bx, \by) \sim D} \  \Bigg\{ \sum_{t\ge 1} \Big[ \tilde{A}_{\pi_{\theta}}([\bx, \by_{< t}]) - \beta_t \cdot \KL\big(\pi_{\theta}(\,\cdot\, | \bx,  \by_{< t})~\big\|~\pialigned(\,\cdot\, | \bx,  \by_{< t}) \big) \Big] \Bigg\},
\label{eqn:tokenwise-rl-objective}
\end{align}
where the advantage function is optimized at each token position $t$ with a token-wise KL regularizer term $\beta_t \cdot \KL\big(\pi_{\theta}(\,\cdot\, | \bx,  \by_{< t})~\big\|~\pialigned(\,\cdot\, | \bx,  \by_{< t}) \big)$ applied for each token position. The strength of regularization at each token position is controlled by $\beta_t$.

\textbf{Closed Form of The Optimal Solution $\pi^*$.} Omitting some intermediate steps for brevity, by Theorem 2.1 of \citet{mudgal2024controlled}, the optimal policy solution $\pi^*$ of Eqn~\ref{eqn:tokenwise-rl-objective} is:
\begin{align}
    & \pi^*(z | \bx, \by_{< t}) = \frac{1}{Z(\bx, \by_{<t})}  \pialigned (z | \bx, \by_{<t}) \cdot e^{V_{\pi^*}([\bx, \by_{<t}, z]) / \beta_t},
\end{align}
where $Z(\bx, \by_{<t})$ is the partition function, $V_{\pi^*}$ is the value function of this optimal policy. We note that this conveniently allows us to re-arrange terms to represent the optimal value function---similar to the steps taken during the derivation of DPO~\cite{rafailov2023direct}:
\begin{align}
    & V_{\pi^*}([\bx, \by_{<t}]) = \beta_t \cdot \log \frac{\pi^*(z|\bx, \by_{<t})}{\pialigned(z|\bx, \by_{<t})} + \beta_t \cdot \log Z(\bx, \by_{<t})
\label{eqn:token-wise-optimal-solution}
\end{align}

\subsubsection{Casting Fine-tuning into a Reinforcement Learning Objective}
\label{appendix-subsubsec:cast-custom-fine-tuning-to-RL}

In the setting of custom fine-tuning we consider in this work, the dataset $D$ is in the form of $D := \{\bx, \by\}$ with only inputs and example outputs, without preference pairs. 
% Therefore, the standard SFT objective~(Eqn~\ref{eqn:sft}) is often the default choice for such custom fine-tuning. However, it is not the only way to learn from the dataset $D$. 
Now, we show an alternative to the standard SFT objective~(Eqn~\ref{eqn:sft}) for learning from this dataset: casting the optimization as a KL-regularized RL objective in Appendix~\ref{appendix-subsubsection:token-wise-rl-formulation-basics}. 
We will show how our token-wise constrained fine-tuning objective in Eqn~\ref{eqn:soft-sft-fine-tuning-loss} can essentially be derived from this token-wise KL-regularized RL problem!

% \textbf{A Token-wise Reward Modeling to Encode Custom Fine-tuning.} 
% PH: it's not really reward modeling, is it? You can't have a reward that's modeling the value function.
Intuitively, fine-tuning $\pialigned$ further on custom data points $(\bx, \by)$ implicitly assumes that this fine-tuning data is as preferable or more preferable than whatever the model would currently output.
To represent this implied preference that the custom example response $\by$ is better than the current responses from $\pialigned$, we can effectively leverage existing preference learning methods.
Recall that, in preference optimization (with both positive and negative examples), the Bradley-Terry model~\cite{bradley1952rank} is used as a model of the underlying reward function. In our setup, we don't have pair-wise preference data, and we only have inputs and positive examples. However, we can define a boolean  $T([\bx, \by_{<t}], y_t)$ to encode the preference: in the fine-tuning task, $y_t$ is a preferable token output following $[\bx, \by_{<t}]$ compared to the responses from the initial aligned model $\pialigned$. So, given an expected random draw from the aligned policy and the draw from the fine-tuned optimal policy, we can define:
% Then, leveraging Eqn~\ref{eqn:token-wise-optimal-solution}:
%we can therefore further define the optimal value function (reward) to reflect this custom preference: %\ahmad{might be good to ground this for a random draw from $\pi_{aligned}$ to explain what it means; basically we are saying the lilelihood that $y_t$ is preferable over a random draw, right?}
\begin{align}
    & \mathop{\mathbb P}\Big(T([\bx, \by_{<t}], y_t)\Big) = \sigma\Bigg( V_{\pi^*}([\bx, \by_{<t}, y_t]) - 
 \mathop{\mathbb E}_{z \sim \pialigned(\cdot |\bx, \by_{<t} ) } V_{\pi^*}([\bx, \by_{<t}, z])\Bigg)
\label{eqn:theoretical-token-wise-reward-model}
\end{align}

Intuitively, this means that---conditioned on the context $[\bx, \by_{\le t}]$---if there is a
continuation token $y_t$ that is higher than the average reward of actions
sampled from the initial aligned model $\pialigned$, it is likely to be an improved or preferred choice in the custom fine-tuning task. 
The higher the margin $V_{\pi^*}([\bx, \by_{<t}, y_t]) - 
 \mathop{\mathbb E}_{z \sim \pialigned(\cdot |\bx, \by_{<t} ) } V_{\pi^*}([\bx, \by_{<t}, z])$ is, the more likely it is an
improvement. 
% Another view of this would be a Bradley-Terry preference function based on value functions, as opposed to reward functions.

% Compared with traditional RLHF and DPO settings, this reward modeling is defined to replace the standard Bradley-Terry model~\cite{bradley1952rank} that is defiend on (positive, negative) preference data pairs.

%on the based aligned model in a reward model first and
%optimize the model to maximize the reward.

% \textbf{Reinforcement Learning.} 
With this function in mind, combined with Eqn~\ref{eqn:token-wise-optimal-solution}, we can leverage a similar derivation to DPO~\cite{rafailov2023direct} to arrive at a constrained fine-tuning objective that is only dependent on the current policy, but is regularized by the original aligned policy. We plug in the closed form of the optimal value function~(Eqn~\ref{eqn:token-wise-optimal-solution}) into the modeling in Eqn~\ref{eqn:theoretical-token-wise-reward-model}, obtaining:
\begin{align}
    \mathop{\mathbb P}\Big(T([\bx, \by_{<t}], y_t)\Big) &= \sigma \Bigg( \beta_t  \log \frac{\pi^*(y_t|\bx, \by_{< t})}{\pialigned(y_t|\bx, \by_{<t})} - \beta_t \hspace{-.03in}\mathop{\mathbb E}_{z \sim \pialigned(\cdot |\bx, \by_{<t} ) } \hspace{-.03in} \log \frac{\pi^*(z|\bx, \by_{<t})}{\pialigned(z|\bx, \by_{<t})} \Bigg) \nonumber \\
    & = \sigma \Bigg( \beta_t  \log \frac{\pi^*(y_t|\bx, \by_{< t})}{\pialigned(y_t|\bx, \by_{<t})} + \beta_t \KL\Big(\pi^*(\cdot | \bx, \by_{<t}) \big\| \pialigned(\cdot | \bx, \by_{<t}) \Big) \Bigg).
\end{align}
Thus, we don't need to explicitly learn the value function, instead it is implicitly encoded by the policy. Then the optimization objective can become:
% \peter{Wait, this whole thing was saying it's not MLE, but now it is? I'm going to comment this part out.}
% Note that, the custom fine-tuning dataset $D$ provides us the learning signals for learning the optimal policy $\pi^*$ here that corresponds to the optimal reward/value $V_{\pi^*}$. \textbf{Specifically, the custom fine-tuning on a custom dataset $D$ comprising of data pairs $(\bx, \by)$ can now be casted into a maximal likelihood estimation of $\mathop{\mathbb P_{\theta}}\Big(T(\bx, \by_{<t}, y_t)\Big)$ on $D$:}

\begin{align}
    & \max_{\theta} \ \  \mathop{\mathbb E}_{(\bx, \by) \sim D} \Bigg\{ \sum_{t \ge 1} \frac{1}{\beta_t}\log \mathop{\mathbb P_{\theta}}\Big(T([\bx, \by_{<t}], y_t)\Big) \Bigg\} \label{eqn:token-wise-reward-modeling-mle-objective},
\end{align}
where: 
\begin{align}
    & \mathop{\mathbb P_{\theta}}\Big(T([\bx, \by_{<t}], y_t)\Big)  := \sigma \Bigg( \hspace{-.03in}\beta_t  \log \frac{\pi_{\theta}(y_t|\bx, \by_{< t})}{\pialigned(y_t|\bx, \by_{<t})} + \beta_t  \KL\Big(\pi^*(\cdot | \bx, \by_{<t}) \big\| \pialigned(\cdot | \bx, \by_{<t}) \Big) \hspace{-.03in}\Bigg),
    \label{eqn:inverse-rl-theta-objective}
\end{align}
and the division of $\beta_t$ in Eqn~\ref{eqn:token-wise-reward-modeling-mle-objective} normalizes the gradient norm at each position $t$ as we will later see in Eqn~\ref{eqn:final-consistency-result} and also clarified in Section~\ref{subsec:token-wise-objective-intro} (and Appendix~\ref{appendix:loss_gradients}).

% Here, the 
% PH: I really don't follow what this means. Just commenting it out. What is the ``taste''?

% implication of the objective in Eqn~\ref{eqn:token-wise-reward-modeling-mle-objective} is that instead of directly optimizing a maximal likelihood estimator on the custom fine-tuning dataset like in Eqn~\ref{eqn:sft}, we encode the ``taste''
% of the desired improvement represented by the custom fine-tuning dataset $D$ into the reward function $r$ and the value function $V$ defined on it. 
% Then, the policy $\pi_\theta$ is learned by optimizing the policy for the reward model~(defined by Eqn~\ref{eqn:theoretical-token-wise-reward-model}) instead. This reward model is further implicitly learned via the inverse reinforcement learning tricks~(i.e., Eqn~\ref{eqn:token-wise-optimal-solution}, \ref{eqn:inverse-rl-theta-objective}), the same as DPO.

% In this case, the optimization problem is the equivalent of optimizing for the optimality gap between the optimal policy's actions in the dataset and the off-policy expected value of 

Note that $\beta_t \cdot \KL\Big(\pi^*(\cdot | \bx, \by_{<t}) \big\| \pialigned(\cdot | \bx, \by_{<t}) \Big)\ge 0$ is a non-negative constant, we have: %\ahmad{I will also think about the narrative here}
\begin{align}
    & \mathop{\mathbb P_{\theta}}\Big(T([\bx, \by_{<t}], y_t)\Big) \ge \sigma \Bigg( \beta_t \cdot \log \frac{\pi_{\theta}(y_t|\bx, \by_{< t})}{\pialigned(y_t|\bx, \by_{<t})} \Bigg)
\end{align}

So, the objective in Eqn~\ref{eqn:token-wise-reward-modeling-mle-objective} can be replaced with a lower bound surrogate objective $L_{\theta}$:
\begin{align}
    & \mathop{\mathbb E}_{(\bx, \by) \sim D} \Bigg\{ \sum_{t \ge 1} \frac{1}{\beta_t}  \log \mathop{\mathbb P_{\theta}}\Big(T([\bx, \by_{<t}], y_t)\Big) \Bigg\} \ge L_{\theta} 
    \label{eqn:final-consistency-result}
\end{align}
where 
\begin{equation*}
    L_{\theta} := \mathop{\mathbb E}_{(\bx, \by) \sim D} \Bigg\{ \sum_{t \ge 1} \frac{1}{\beta_t} \cdot \log \ \sigma \Bigg( \beta_t \cdot \log \frac{\pi_{\theta}(y_t|\bx, \by_{< t})}{\pialigned(y_t|\bx, \by_{<t})} \Bigg) \Bigg\}.
\end{equation*}

Eqn~\ref{eqn:soft-sft-fine-tuning-loss} is then equivalent to $\min_{\theta} -L_\theta \iff  \max_{\theta}  L_{\theta}$. So, optimizing the objective Eqn~\ref{eqn:soft-sft-fine-tuning-loss} is essentially to maximize the lower-bound of the reinforcement learning objective in Eqn~\ref{eqn:token-wise-reward-modeling-mle-objective}.

\end{document}